\newcommand{\p}[1]{\mathbb{P}\left[{#1}\right]}
\newcommand{\e}[1]{\mathbb{E}\left[{#1}\right]}
\newcommand{\indeg}{\operatorname{deg}^{-}}
\newcommand{\eq}[1]{(\ref{#1})}
\newcommand{\aas}{{\sl a.a.s.}\ }
\newcommand{\eps}{\varepsilon}
\newtheorem{theorem}{Theorem}[section]
\newtheorem{corollary}[theorem]{Corollary}
\newtheorem{lemma}[theorem]{Lemma}
\newtheorem{conjecture}[theorem]{Conjecture}
\newtheorem{definition}[theorem]{Definition}
\newtheorem{question}[theorem]{Question}
\theoremstyle{remark}
\newtheorem*{remark}{Remark}
\begin{document}
\author{Jeannette Janssen}
\address{JJ:
Department of Mathematics and Statistics\\
Dalhousie University\\
6316 Coburg Road\\
PO BOX 15000\\
Halifax, Nova Scotia\\
Canada B3H 4R2}
\email{jeannette.janssen@dal.ca}
\thanks{The collaboration between the authors was a result of the visit of the first author to the Institute of Mathematics and Applications (IMA) in Minnesota. She wishes to thank IMA for providing this opportunity. 
She also acknowledges NSERC for their support of this research.} 
\author{Abbas Mehrabian}
\address{AM: University of Waterloo, Waterloo, ON, Canada}
\curraddr{Department of Computer Science\\
University of British Columbia\\
2366 Main Mall\\
Vancouver, B.C.\\ 
V6T 1Z4 Canada}
\email{AbbasMehrabian@gmail.com}
\urladdr{http://www.cs.ubc.ca/~amehrabi/}
\thanks{The second author was supported by the Vanier Canada Graduate Scholarships program,
a PIMS Postdoctoral Fellowship, and an
NSERC Postdoctoral Fellowship}

\date{\today}
\title{Rumours spread slowly in a small world spatial network}
\thanks{A preliminary version of this paper appeared in proceedings of the 12th workshop on algorithms and models for the web graph, WAW 2015, published by Springer in the Lecture Notes for Computer Science 9479, pp.\ 107--118.
In this full version, all the proofs are included, and the upper bound for the effective diameter is improved to $O(\log^2 n)$.}

\keywords{Spatial Preferred Attachment model;
rumour spreading;
effective diameter;
push\&pull protocol}
\subjclass[2010]{05C82; 05C80, 60D05, 90B15, 91D30}

\begin{abstract}
Rumour spreading is a protocol for modelling the spread of information through a network via user-to-user interaction. 
The Spatial Preferred Attachment (SPA) model is a random graph model for complex networks: vertices are placed in a metric space, and the link probability depends on the metric distance between vertices, and on their degree. 
We show that the SPA model typically produces graphs that have small effective diameter, i.e.\ $O(\log^2 n)$, while rumour spreading is relatively slow, namely polynomial in $n$.
\end{abstract}

\maketitle

\section{Introduction}

There is increasing consensus in the scientific community that complex networks (e.g.~on-line social networks or citation graphs) can be accurately modelled by spatial random graph models. Spatial random graph models are models where the vertices are located in a metric space, and links are more likely to occur between vertices that are close together in this space. The space can be interpreted as a {\sl feature space}, which models the underlying characteristics of the entities represented by the vertices. Specifically, entities with similar characteristics (for example, users in a social network that share similar interests) will be placed close together in the feature space. Thus the distance between vertices is a measure of affinity, and thus affects the likelihood of the occurrence of a link between these vertices.

An important reason to model real-life networks is to be able, through simulation or theoretical analysis, to study the dynamics of information flow through the network. Several ways to model flow of information through a network have been proposed recently, based on metaphors such as the spread of infection or of fire, or the range of a random walk through the graph \cite{burning,coverPA,KempeKleinTardos,EpidemicNetworks}. Here we focus on a protocol called {\sl rumour spreading}. It differs from the models based on fire or infection in that in each round, the rumour spreads to only one neighbour of each informed vertex. On the other hand, the difference with a random walk approach is that each informed vertex spreads the rumour, and thus we have more of a growing tree of random walks. 

In this paper we study the behaviour of the rumour spreading protocol on graphs produced by the Spatial Preferential Attachment (SPA) model, which is a spatial model that produces sparse power law graphs. We show that, on the one hand, the graph distance between vertices in such a graph tends to be small (polylogarithmic in $n$, the number of vertices), while on the other hand, it takes a long time (polynomial in $n$) to spread the rumour to most of the vertices.

\subsection{The SPA model}

The SPA model is a growing graph model, where one new vertex is added to the graph in each time step. The vertices are chosen from a metric space. Each vertex has a sphere of influence, whose size grows with the degree of the vertex.  A new vertex can only link to an existing vertex if it falls inside its sphere of influence. Therefore, links between vertices depend on their (spatial) distance, and on the in-degree of the older vertex. 

The SPA model was introduced in \cite{spa_def}, where it was shown that asymptotically, graphs produced by the SPA model have a power law degree distribution with exponent in $[2,\infty)$ depending on the parameters. The model was further studied in \cite{spa_typical,geoSPA,lumpySPA}. The model can be seen as a special case of the spatial model introduced by Jacob and M\"{o}rters in \cite{jacob1} and further studied in \cite{jacob2}. 
The SPA model has similarities with the spatial models introduced in \cite{geoprotean,threshbrad,geopref2,geoprefZuev}. 

Let $S$ be the unit hypercube in $\mathbb{R}^m$, equipped with the torus metric derived from the Euclidean norm. 
The SPA model stochastically generates a graph sequence $\{G_t\}_{t\geq 0}$; for each $t \ge 0$, $G_t=(V_t, E_t)$, where $E_t$ is an edge set, and $V_t \subseteq S$ is a vertex set.  The index $t$ is an indication of time.   The in-degree, out-degree and total degree of a vertex $v$ at time $t$ is denoted by $\deg^-(v,t)$, $\deg^+(v,t)$ and
$\deg(v,t)$, respectively.

We now define the {\sl sphere of influence} $S(v,t)$ of a vertex $v$ at time $t$. Let 
$$
A(v,t) \coloneqq \frac{A_1 \deg^-(v,t) + A_2}{t},
$$
where $A_1, A_2>0$ are given parameters.
If $A(v,t)\leq 1$, then $S(v,t)$ is defined as the ball, centred at $v$, with total volume $A(v,t)$. If $A(v,t)>1$ then  $S(v,t) = S$, and so $|S(v,t)| = 1$. To keep the second option from happening often, we impose the additional restriction that $A_1 < 1$; this ensures that in the long run, $S(v,t)\ll 1$ for all $v$. 

The generation of a SPA model graph begins at time $t = 0$ with $G_0$ being the null graph. At each time step $t \geq 1$, a node $v_t$ is chosen from $S$ according to the uniform distribution, and added to $V_{t-1}$ to form $V_t$.  Next, independently for each vertex $u\in V_{t-1}$ such that $v_t \in S(u,t)$, a directed link $(v_{t},u)$ is created with probability $p$. 

Because the volume of the sphere of influence of a vertex is proportional to its in-degree, so is the probability of the vertex receiving a new link at a given time. Thus link formation is governed by a preferential attachment, or ``rich get richer'', principle, which leads to a power law degree distribution of the in-degrees, and thus also of sizes of the spheres of influence.  

Another important feature of the model is that all spheres of influence tend to shrink over time. This means that the length of an edge (the distance between its endpoints) depends on the time when it was formed: edges formed in the beginning of the process tend to be much longer than those formed later (see \cite{geoSPA} for more on the distribution of edge lengths). As we will see, the old, long links significantly decrease the graph distance between vertices. This is a feature unique to the SPA model; ``static'' variations of the SPA model such as that presented in \cite{BonGleich}, tend to limit the maximum length of an edge, which leads to a larger diameter.

Note that the SPA model generates directed graphs. However, the rumour spreading protocols we study here completely ignore the edge orientations; we imagine that they work on the corresponding undirected underlying graph. 
Similarly, in estimating the  graph distances, we ignore the edge orientations.

\subsection{Rumour spreading}

Rumour spreading is a model for the spread of one piece of information, the {\sl rumour}, which starts at one vertex, and in each time step, spreads along the edges of the graph according to one of the following protocols. 

The \emph{push protocol} is a round-robin rumour spreading protocol defined as follows: initially one vertex of a simple undirected graph knows a rumour and wants to spread it to all other vertices.
In each round, every informed vertex sends the rumour to a random neighbour.

The \emph{push\&pull protocol} is another round-robin rumour spreading protocol defined as follows: initially one vertex of a simple undirected graph knows a rumour and wants to spread it to all other vertices.
In each round, every informed vertex sends the rumour to a random neighbour, while every uninformed vertex contacts a random neighbour and gets the rumour from her if she knows it.

In both protocols defined above, all vertices work in parallel.
These are synchronized protocols, so if a vertex receives the rumour at round $t$,
it starts passing it on from round $t+1$.
Also, vertices do not have memory,
so a vertex might contact the same neighbour in consecutive rounds.

We are interested in the {\sl spread time}, the number of rounds needed for all vertices to get informed. Since the SPA model does not generally produce connected graphs, we here limit this requirement to vertices in the same component as the starting vertex.
It is clear that the push\&pull protocol is generally quicker (this can be made precise via a coupling argument).

The push protocol was defined in~\cite{push_first} for the complete graph,
and was studied in~\cite{FPRU90} for general graphs.
The push\&pull protocol was defined in~\cite{DGH+87}, where experimental results were presented, and the first analytical results appeared in~\cite{pushpull_rigorous}.

\subsection{Main results}

Clearly, the diameter of a graph is a lower bound on the spread time, at least for appropriate choices of starting vertex. An easy well known upper bound for spread time is $O(\Delta (\mathrm{diameter} + \log n))$  \cite[Theorem~2.2]{FPRU90}, 
where $\Delta$ denotes the maximum degree.
So in graphs of bounded degree, spread time is largely determined by the diameter.  
Another important factor in rumour spreading is the degree distribution of the graph. 
Vertices of high degree tend to slow down the spread, since only one neighbour of a vertex is contacted in each round. 
SPA model graphs have a power law degree distribution, and the maximum degree is typically $\Omega (n^{A_1})$ (see \cite{spa_def}). 

In this paper we prove two main results. First, we show that for most pairs of vertices, the graph distance is polylogarithmic in the number of vertices. 
Thus, SPA model graphs are so-called small worlds. SPA model graphs are generally not connected, and the size and threshold of the giant component are not exactly known. Therefore we state our result in terms of the {\sl effective diameter}, introduced in \cite{forest_fire_journal}. A graph $G$ has effective diameter at most $d$ if, for at least 90\% of all pairs of vertices of $G$ that are connected, their graph distance is at most $d$.
We say an event happens \emph{asymptotically almost surely (a.a.s.)} if its probability approaches 1 as $n$ goes to infinity.
All logarithms are in the natural base in this paper.

Recall that the SPA model has four parameters: 
$m\in \mathbb{Z}_+$ is the dimension,
$A_1,A_2>0$ control the volumes of vertices' spheres of influence, 
and $p\in(0,1]$ is the probability of link formation.

\begin{theorem}
\label{thm:main_diam}
For each choice of $A_1\in [0,1)$, and for large enough choice of $A_2$, \aas a graph produced by the SPA model with parameters $A_1,A_2,$ $p=1$ and $m=2$ has effective diameter $O(\log^2 n)$.
\end{theorem}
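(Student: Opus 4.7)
The plan is to route most pairs of connected vertices through a common \emph{old core} using an iterated halving argument. The central lemma to establish is: a.a.s., for every (or every typical) vertex $v_t$ born at time $t$, there is a path of length $O(\log n)$ in the undirected graph from $v_t$ to some vertex born at time at most $t/2$. Iterating this lemma $\log_2 n$ times produces a path of length $O(\log^2 n)$ from each typical vertex to the set of $O(1)$ oldest vertices; the distance between a connected pair $(u,v)$ is then controlled by concatenating the paths $u \to \text{core} \to v$.

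The driving estimate for the halving lemma is that the total expected area at time $t$ covered by the spheres of influence of vertices born in $[1,t/2]$ is
\[
\int_1^{t/2} \Theta\!\left(s^{-A_1}\, t^{A_1-1}\right)\, ds \;=\; \Theta(1),
\]
a positive constant $c_1$ independent of $t$ (using that the expected in-degree at time $t$ of a vertex born at time $s$ is $\Theta((t/s)^{A_1})$, so its sphere volume at time $t$ is $\Theta(s^{-A_1}t^{A_1-1})$). Since $p=1$, with probability at least $1-e^{-c_1}$ the freshly placed $v_t$ lands in one of these spheres and immediately acquires an out-edge to a vertex born before $t/2$. To obtain an a.a.s.\ statement for all vertices simultaneously, I would run a BFS of depth $\Theta(\log n)$ from $v_t$ in the undirected graph: the hypothesis of ``large enough $A_2$'' makes the average out-degree $A_2/(1-A_1)$ as large as required, so the BFS tree contains many quasi-independent ``halver candidates'', pushing the per-vertex failure probability below $n^{-2}$ and permitting a union bound over all $n$ vertices. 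The core is chosen as the $O(1)$ oldest vertices; for $A_2$ large enough, their spheres of influence cover $S$ at every time step, forcing the core to be connected with constant diameter.

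The main obstacle is controlling the dependencies in the halving BFS: the ``good halver'' events along the tree are not mutually independent, because they share randomness in the positions and out-edges of earlier vertices. I plan to handle this either (i) by conditioning on the positions of the ancestors and applying Azuma--Hoeffding on the exposure of subsequently arriving vertices, or (ii) by dominating the BFS with a supercritical Galton--Watson process whose offspring are independent good halvers with success probability $\Omega(1)$, so that the BFS reaches a good halver within depth $O(\log n)$ except with probability $o(n^{-2})$. Finally, to pass from ``every typical vertex'' to the $90\%$ effective-diameter threshold, I would discard the $o(n)$ vertices for which the halving lemma fails and verify that at least $90\%$ of the connected pairs still fall within the claimed bound.
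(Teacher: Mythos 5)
Your overall architecture (iterated halving, $O(\log n)$ per level, $O(\log^2 n)$ total) matches the paper's in spirit, but you halve over \emph{birth times}, whereas the paper halves over the radii of a nested family of random geometric subgraphs $R_t\subseteq G_n$ (two vertices adjacent in $R_t$ iff within distance $\sqrt{A_2/(t\pi)}$, which is always an edge of $G_t$ because every sphere of influence has volume at least $A_2/t$). The paper's key lemma routes a vertex of the spanning component of $R_{2t}$ to that of $R_t$ in $O(\log t)$ steps using left-to-right and top-to-bottom slab crossings plus a known Euclidean-versus-graph-distance bound for random geometric graphs; all the probabilistic work is thereby outsourced to percolation results for $RGG(N,r)$, where positions are i.i.d.\ uniform and adjacency is deterministic.

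The genuine gap in your route is the step ``the BFS tree contains many quasi-independent halver candidates, pushing the per-vertex failure probability below $n^{-2}$.'' Two problems. First, SPA graphs are disconnected: vertices in bounded-size components have BFS trees with $O(1)$ vertices, so no depth choice yields failure probability $o(n^{-2})$, the union bound over all $n$ vertices collapses, and your fallback of discarding the failures presupposes a bound on the size of the exceptional set --- which is essentially a giant-component statement you have not established. Second, and more fundamentally, the halver events are not quasi-independent: whether a candidate $u$ lands in an old sphere is determined by the union of the spheres of the $\tau/2$ oldest vertices, a single random set whose complement has measure $\Omega(1)$ whenever the expected covered area is a constant below $1$, and that complement can contain macroscopic connected holes in which \emph{every} vertex of a spatially localized BFS tree fails simultaneously. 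Neither Azuma (no usable Lipschitz/exposure structure once the tree is conditioned on) nor Galton--Watson domination (no independence of offspring successes) applies as stated; controlling exactly this spatial correlation is what the paper's crossing and spanning-component machinery is for. Two smaller points: the probability that $v_t$ lands in the union of the old spheres is the measure of the union, not $1-e^{-c_1}$, so overlaps must be controlled (fixable by a second-moment or Bonferroni argument); and your connected core of the $O(1)$ oldest vertices is fine, since $A(v,t)\geq A_2/t\geq 1$ for all $t\leq A_2$, so they form a clique.
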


\begin{remark}
The constant 90\%
in the definition of effective diameter is somewhat arbitrary. Our arguments yield similar bounds if this is changed to any other constant $1-\varepsilon$ strictly smaller than 100\%.
\end{remark}

As noted before, this result refers to the {\em undirected} diameter. In \cite{spa_typical}, it was shown that \aas any shortest directed path has length $O(\log n)$.  This result does not apply to our situation, since pairs connected by a directed path are a small minority.

Bringmann, Keusch, and Lengler~\cite{Bringmann} proved a polylogarithmic upper bound for the diameter of a spatial random graph model with given expected degrees.
In their model the edges appear independently, and so their result does not apply to our model.

We believe the conclusion of Theorem~\ref{thm:main_diam} is not tight.
We make the following conjecture.

\begin{conjecture}
For each choice of $p,A_1\in (0,1)$, and for large enough choice of $A_2$, \aas the giant component of a graph produced by the SPA model with parameters $A_1,A_2,$ $p$ and $m=2$ has diameter $O(\log n)$.
\end{conjecture}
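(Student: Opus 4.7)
\emph{Overall approach.} The plan is to strengthen the $O(\log^2 n)$ bound of Theorem~\ref{thm:main_diam} by establishing a \emph{geometric aging lemma}: for some constant $c=c(A_1,A_2,p)>1$, a.a.s.\ every vertex $v_t$ in the giant component has a graph-neighbour born at time $\leq t/c$. Iterating this $O(\log n)$ times compresses the birth time from $n$ down to a fixed constant, reaching a small ``core'' of early vertices of constant diameter. The $O(\log^2 n)$ bound in Theorem~\ref{thm:main_diam} presumably comes from an additive two-scale argument; the point is to replace it by a single multiplicative step per edge.

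\emph{Step 1 (aging lemma).} Using the classical preferential-attachment estimate $\mathbb{E}\deg^-(v_r,t)=\Theta((t/r)^{A_1})$ established in~\cite{spa_def}, the expected number of out-edges from $v_t$ to vertices born during the window $[t/c,t]$ is
\begin{equation*}
p\sum_{r=\lceil t/c\rceil}^{t-1}\frac{A_1\,\mathbb{E}\deg^-(v_r,t)+A_2}{t}\;=\;pA_2(1-1/c)+\Theta_c(pA_1),
\end{equation*}
which, for $A_2$ sufficiently large, exceeds any prescribed constant $K$. A Chernoff bound on the independent $\operatorname{Bernoulli}(p)$ edge indicators then yields per-vertex failure probability $e^{-\Omega(K)}$.

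\emph{Step 2 (core and chaining).} Fix $T_0$ a large constant with $T_0\leq A_2$ (permitted since we may take $A_2$ arbitrarily large). Since $A(v,t)\geq A_2/t\geq 1$ for $t\leq A_2$, during the first $T_0$ rounds every sphere of influence equals the whole torus, so the subgraph on $\{v_1,\ldots,v_{T_0}\}$ stochastically contains an Erd\H{o}s--R\'enyi graph $G(T_0,p)$, which is connected of diameter $O(1)$ with probability arbitrarily close to~$1$. Applying Step~1 inductively from any $v_t$ produces a walk whose birth times shrink by factor~$c$ per step, reaching the core in $O(\log n)$ edges.

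\emph{Main obstacle.} The delicate point is that the mean in Step~1 is a constant rather than $\omega(1)$, so the per-vertex failure probability is $\Theta(1)$ and a naive union bound fails. My strategy is to call $v_t$ \emph{bad} if it has no out-edge to any earlier vertex at all, and to show that bad vertices cannot lie in the giant component. A bad vertex sees only incoming edges, each carrying independent $\operatorname{Bernoulli}(p)$ noise, so a percolation comparison should dominate the bad-vertex subgraph by a subcritical Galton--Watson tree when $A_2$ is large; the hypothesis $p<1$ is essential here, since it supplies the independent binomial noise needed for strict subcriticality. A secondary hurdle is the dependence of the Step~1 events across different~$t$ through the shared degree history, which I would control via Azuma--Hoeffding concentration of $\sum_{r\in[t/c,t]}A(v_r,t)$ on the natural time-filtration, yielding uniform control of the aging property for all $t$ simultaneously.
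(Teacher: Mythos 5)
First, note that the statement you are proving is stated in the paper as a \emph{conjecture}: the authors explicitly leave it open (their Theorem~\ref{thm:main_diam} only gives $O(\log^2 n)$, only for $p=1$, and even the threshold for the giant component when $p<1$ is flagged as unknown). So there is no proof in the paper to compare against, and your proposal must stand on its own. It does not: the central step fails. Your aging lemma asserts that \emph{every} vertex $v_t$ of the giant component has a neighbour born before $t/c$, but as you yourself compute, the expected number of out-edges from $v_t$ into the window $[t/c,t]$ (and indeed to all of $[1,t]$) is a \emph{constant} depending on $A_2$, $c$, $p$. The out-degree of a vertex is fixed at birth and is approximately Poisson with constant mean, so a constant \emph{fraction} of vertices have out-degree $0$, and a constant fraction have all their out-neighbours born after $t/c$. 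No choice of $A_2$ fixes this, and no union bound can absorb $\Theta(n)$ failures. Your proposed repair makes two errors: (i) ``bad $=$ no out-edge to any earlier vertex'' is strictly weaker than the property you actually need (no out-edge to a vertex born before $t/c$), so even a correct treatment of your bad vertices would not rescue the aging lemma; and (ii) the claim that bad vertices cannot lie in the giant component is false. A vertex $v_t$ with out-degree $0$ still accumulates in-edges over time --- its expected in-degree at time $n$ is $\Theta((n/t)^{pA_1})$ --- and the diameter is measured in the undirected underlying graph, so such a vertex is typically deep inside the giant component. Relatedly, the heuristic that large $A_2$ pushes the bad-vertex subgraph toward \emph{sub}criticality is backwards (larger $A_2$ means larger spheres and more edges), and $p<1$ makes the problem harder, not easier.

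There are secondary problems as well. The expected in-degree exponent is $pA_1$, not $A_1$. The indicators of the events ``$v_t$ lands in $S(v_r,t)$'' for different $r$ are not independent (they are functions of the single uniform position of $v_t$), so a literal Chernoff bound does not apply; a Poisson-type argument gives the same constant-order conclusion, which only reinforces the main obstacle. Finally, your core $\{v_1,\dots,v_{T_0}\}$ with $T_0\le A_2$ a constant is connected only with probability bounded away from $1$ (a fixed $G(T_0,p)$ is disconnected with constant probability), which is incompatible with an \aas statement; letting $T_0\to\infty$ violates $T_0\le A_2$ since $A_2$ is a constant. The multiplicative ``shrink the birth time by a factor $c$ per step'' idea is a natural template for an $O(\log n)$ bound, but to make it work for the SPA model with $p<1$ you would need to route through in-neighbours or through a percolated version of the geometric backbone used in Section~\ref{sec:diameter}, neither of which your argument currently provides.
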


Our proof for Theorem~\ref{thm:main_diam} is based on two-dimensional objects called crossings, and for extending it to higher dimensions new techniques are required.
We leave this as an open problem.

\begin{question}
Extend Theorem~\ref{thm:main_diam}
to higher dimensions $m>2$.
\end{question}

Our second result illustrates that, in spite of the small world property, \aas rumour spreading with the push\&pull protocol is slow, that is, takes polynomial time in $n$.
\begin{theorem}
\label{thm:main_push}
Let $G$ be a graph produced by the SPA model with parameters $A_1,A_2>0,m \in \mathbb{Z}_+$, and assume that $a \coloneqq pA_1<1$.
Define $K \coloneqq (3+a)m+1-a$ and 
let $\alpha < a(1-a)/K$ be a constant.
If a rumour starts in $G$ from a uniformly random vertex, then a.a.s.\ after $n^{\alpha}$ rounds of the push\&pull protocol, the number of informed vertices is $o(n)$. 
\end{theorem}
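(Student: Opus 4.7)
The plan is to combine a spatial bottleneck around the starting vertex with the slow transmission rate of push\&pull across high-degree vertices. Fix a metric radius $r=n^{-\gamma/m}$ and a degree threshold $D=n^{a-\gamma}$ for a parameter $\gamma\in(0,a)$ to be tuned later, and call a vertex a \emph{hub} if its total degree at time $n$ exceeds $D$. Standard concentration results for the SPA model (cf.\ \cite{spa_def,spa_typical}) imply that the number of hubs is a.a.s.\ at most polylogarithmic times $n^{\gamma/a}$, so a uniformly random starting vertex $v_0$ is a.a.s.\ not a hub. The geometric content of the argument is that every edge of $G$ of Euclidean length at least $r$ is incident to a hub: the edge $(v_{t^*},u)$ requires $v_{t^*}\in S(u,t^*)$, hence $|S(u,t^*)|\gtrsim r^m$ and therefore $\deg^-(u,t^*)\gtrsim r^m t^*$; the usual SPA degree-growth estimate then gives $\deg(u,n)\gtrsim r^m t^{*(1-a)}n^a\gtrsim r^m n^a=D$. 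Consequently every path in $G$ from $v_0$ to a vertex outside the ball $B=B(v_0,r)$ traverses at least one hub.

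Next I would exploit the dynamical bottleneck that, between any two hubs $h_1,h_2$ of degree $\geq D$, one push\&pull round transmits the rumour with probability at most $1/\deg(h_1)+1/\deg(h_2)\leq 2/D$. A union bound over the orderings of transmissions along a hub-chain of length $k$ yields that the probability the rumour traverses any such chain within $T=n^\alpha$ rounds is at most $\binom{T}{k}(2/D)^k$, which is $o(1)$ provided $T\ll D$. Combined with the a.a.s.\ estimate $|B|=\tilde O(n r^m)=\tilde O(n^{1-\gamma})$ and an accounting of the at most $\tilde O(T)$ vertices an informed hub can reach via push (plus $\tilde O(D)$ low-degree neighbours via pull), one obtains
\[
|I_T|\;\le\;\tilde O(n^{1-\gamma})+(\text{number of informed hubs})\cdot \tilde O(D),
\]
and the right-hand side is $o(n)$ for a suitable choice of $\gamma$.

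The remaining step is parameter optimisation. The constraint $\alpha<a-\gamma$ from the hub-to-hub bottleneck, the requirement that the displayed bound be $o(n)$, and the $m$-dependent losses coming from the volume estimate $|S(u,t)|\asymp r^m$ and from the corresponding ball-population concentration, together yield precisely the stated bound $\alpha<a(1-a)/K$ with denominator $K=(3+a)m+1-a$: the factor of $m$ encodes the spatial dimension, $a$ is the SPA degree-growth exponent, and the complementary exponent $1-a$ enters through the time a long-edge endpoint takes to accumulate degree between its creation and time~$n$.

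The main obstacle, where most of the technical labour lies, is controlling the cascade of information triggered by a single informed hub. Once a hub $h$ is informed, its low-degree neighbours pull from $h$ with near-constant probability per round and hence become informed in $O(\log n)$ rounds; those newly informed low-degree vertices may themselves reach further hubs through short edges and seed secondary cascades. Keeping such nested cascades from inflating the informed count above $o(n)$ requires an inductive decomposition by spatial scale together with a precise accounting of push and pull contributions at each scale, and it is this accounting that ultimately pins down the precise exponent $K$ in the statement.
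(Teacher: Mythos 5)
Your proposal has a genuine gap at its central step, the ``dynamical bottleneck.'' The bound $1/\deg(h_1)+1/\deg(h_2)\leq 2/D$ controls only a \emph{direct} transmission along an edge joining two adjacent hubs. It does not control the two ways the rumour actually escapes a hub: (i) once a hub $h$ is informed, each uninformed neighbour $x$ pulls from $h$ with probability $1/\deg(x)$ per round, which is $\Theta(1)$ for low-degree $x$, so all $\geq D$ neighbours of $h$ --- including the far endpoints of its long edges --- become informed in $O(\log n)$ rounds; and (ii) your geometric lemma only forces the \emph{older} endpoint $u$ of a long edge to be a hub (and even that fails for edges created at times $t^*\lesssim n^{\gamma}$, where the sphere of influence has radius $\geq r$ from the $A_2/t$ term alone), so the far endpoint may have bounded degree and will pull across the long edge at constant rate. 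Consequently $\binom{T}{k}(2/D)^k$ does not bound the probability of crossing $k$ hubs: hubs are entered by pushes from low-degree intermediaries and exited by pulls from low-degree neighbours, neither of which costs a factor $1/D$. You acknowledge this cascade in your final paragraph but defer it to an unspecified ``inductive decomposition by spatial scale''; that deferral is exactly where the difficulty of the theorem lives, so the argument is not a proof. A secondary error: a path of short edges leaves $B(v_0,r)$ after two steps, so the spatially confined part of the informed set must be bounded by the ball of radius $Tr$, not $r$, replacing $\tilde O(n^{1-\gamma})$ by $\tilde O\left(n(Tr)^m\right)$ and changing the optimisation that is supposed to produce $K=(3+a)m+1-a$.

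For contrast, the paper's proof classifies edges as long or short by length $L=n^{-\eta}$ and vertices as old or new by birth time $\tau=n^{\beta}$, and establishes two facts: (a) a.a.s.\ every long edge is created before round $\tau$, so \emph{both} of its endpoints are old; and (b) a.a.s.\ every old vertex $v$ satisfies $\deg(v,\tau)/\deg(v,n)\leq n^{\varepsilon}(y\log n/n)^{pA_1}$, which is polynomially small. Any transmission across a long edge, whether by push or by pull, requires one of its two (old) endpoints to select a long edge among all its incident edges, and the fraction of long edges at an old vertex $v$ is at most $\deg(v,\tau)/\deg(v,n)$; a union bound over the $\tau$ old vertices and the $T$ rounds then shows the rumour a.a.s.\ never crosses a long edge, hence stays inside a ball of radius $TL$ containing $o(n)$ vertices. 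The essential idea your proposal is missing is that the \emph{far} endpoint of every long edge is itself old and therefore dilutes its few long edges among the many short edges it accumulates afterwards; a degree threshold on one endpoint cannot deliver this, and without it the pull mechanism defeats any bottleneck based on hub degrees.
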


\begin{remark}
Note that, since the push protocol is not quicker than the push\&pull protocol, the same lower bound holds for  the push protocol.
\end{remark}

Let us also remark that our diameter result, Theorem~\ref{thm:main_diam}, holds only for dimension 2, and this is because our argument is based on building crossings, whereas our rumour spreading bound holds for all dimensions.


We can understand Theorem~\ref{thm:main_push} as follows. While SPA model graphs have a backbone of long edges that decrease graph distances between vertices, only old edges are long. Old edges have old endpoints, so the vertices on this backbone are old. Old vertices have high degree, and 
vertices of high degree are slower in spreading the rumour.
So if the rumour travels along long edges, then it will become delayed due to high vertex degree, and if it travels along short edges, it takes many steps to cover the entire space. 

In \cite{jacob1} it was shown that, for certain choices of the parameters, the generalized spatial model by Jacob and M\"{o}rters exhibits a similar mixture of long and short edges. This suggests that our results may be extended to this model; this would be an interesting question to pursue. 

\begin{question}
Can our results be extended to the 
Jacob-M\"{o}rter model~\cite{jacob1}?
\end{question}

The push\&pull protocol has been studied on two small-world (non-spatial) models and it turned out that it spreads the rumour in logarithmic time:
it was shown in~\cite{sublogarithmic} that
on a random graph model based on preferential attachment, push\&pull spreads the rumour within $O(\log n)$ rounds.
A similar bound was proved for the performance of this protocol on random graphs with given expected degrees when the average degree distribution is power law~\cite{ultrafast}.
Thus, the SPA model is a unique example of a natural model that exhibits both the small world property and slow rumour spreading.

\section{The effective diameter of SPA model graphs}\label{sec:diameter}

In this section we 
prove Theorem~\ref{thm:main_diam}, which states that a two-dimensional SPA model graph with $p=1$ typically has a small effective diameter. 
Assume that $m=2$ and $p=1$. 
We will derive our bound using properties of the random geometric graph model, especially those studied in \cite{rgg,ganes13,penrose03}.

A two-dimensional random geometric graph on $N$ vertices with radius $r=r(N)$, denoted by $RGG(N,r)$, is generated as follows:
$N$ vertices are chosen independently and uniformly at random from the unit square $S$,
and an edge is added between two vertices if and only if their Euclidean distance is at most $r$.
To see how the geometric random graph model relates to the SPA model, let $\{G_t\}_{t=0}^n$ be a sequence of graphs produced by the SPA model. {Our analysis is based on a sequence of subgraphs of $G_t$, which mimick the behaviour of the model when the in-degree does not influence the size of the sphere of influence.}

For each $t$, define the graph $R_t$ as a
graph with vertex set  $V(G_t)$ in which
two vertices are adjacent if and only if their distance is at most $\sqrt{A_2/t\pi}$.
Observe that $R_t$ conforms to the random geometric graph model on $t$ vertices with radius 
\begin{equation}
r_t := \sqrt{\frac{A_2}{t\pi}} \:. \label{def_rt}
\end{equation}

For all $t$, $R_t$ is a subgraph of (the undirected underlying graph of) $G_t$.
Namely, at all times from 1 to $t$, each sphere of influence has volume at least ${A_2}/{t}$, i.e.\ radius at least $r_t$. Therefore, if two vertices $v_i$ and $v_j$, $1\leq i<j\leq t$, have distance at most $r_t$, then at time $j$, when $v_j$ is born, $v_j$ will fall inside the sphere of influence of $v_i$, and a link $v_jv_i$ will be created. 
We will use the graphs $R_t$ to bound the diameter of $G_n$.

As mentioned earlier, graphs produced by the SPA model are generally not connected. However, we can choose the parameters so that there exists a giant component, i.e.\ a component that contains an $\Omega(1)$ fraction of all vertices. Note that if $R_n$ has a giant component, then so has $G_n$. Moreover, it is known (see \cite{penrose03}) that there exists a constant $a_c$ so that, if $r=\sqrt{\frac{a}{\pi N}}$ with $a >a_{c}$, then \aas  $RGG(N,r)$ has a giant component, while if $a < a_c$ then \aas it does not have one
(note that $a$ is simply the average degree). Experiments give a value of
$a_c \approx 4.51$. 
Therefore, $G_n$ has a giant component \aas if $A_2>a_c$. It would be interesting to determine whether this value of $A_2$ is indeed the threshold for the emergence of the giant component in $G_n$. Determination of this threshold was left as an open problem in~\cite[Section~5]{spa_typical}.

To show that the  effective diameter of the giant component of $G_n$ is $O(\log^2 n)$, we will proceed as follows. 
Given an arbitrary vertex $v$ in the giant component of $R_n$, and thus of $G_n$, the idea is to find a path of length $O(\log n)$ connecting $v$ to some vertex $y_1$ in the giant component of $R_{n/2}$, then connect $y_1$ to a vertex $y_2$ in the giant  of $R_{n/4}$, and so on. 
It will be more convenient to work not with the giant components but with the so-called \emph{spanning components},
which are defined next.

\begin{definition}[Spanning component]
\label{def:spanning}
Let $R$ be a random geometric graph with parameters $N$ and $r$, and let $M\coloneqq \pi N r^2$.
{Let $W= W(N) \coloneqq Mr \log N$, and assume that $W^{-1}$ is an integer. Partition the unit square $S$ into $W^{-1}$ horizontal 
rectangles of size
$W \times 1$
(the \emph{horizontal slabs}),
and also into
$W^{-1}$ vertical 
rectangles of size
$1\times W$
(the \emph{vertical slabs}).}
For a horizontal slab $L$, a \emph{left-to-right crossing} is a path 
$v_0v_1 \dots v_{k}$
contained in $L$
such that 
$v_0$ has distance $\leq r/5$
to the left side of $L$,
$v_{k}$ has distance $\leq r/5$
to the right side of $L$,
and the distance between $v_i$ and $v_{i+1}$
is $\leq r/2$ for each $0\leq i\leq k-1$.
A \emph{top-to-bottom crossing}
is defined similarly for vertical slabs.
It is easy to see that if $R$ has a left-to-right crossing
for each of the 
$W^{-1}$ horizontal slabs,
and has a top-to-bottom crossing for each of the
$W^{-1}$ vertical slabs,
then the vertices of these crossings
are contained in the same connected component,
which is called a \emph{spanning component}.
\end{definition}
See Figure~\ref{spanning component} for an illustration.
The following lemma 
guarantees the existence of spanning components.
\begin{lemma}
\label{lem:spanning}
There exists an absolute constant $a_{\textnormal{large}}$ such that a random geometric graph $R=RGG(N,r)$ with $\pi N r^2 \geq a_{\textnormal{large}}$
has a spanning component with probability
$\geq 1-O(N^{-1})$.
\end{lemma}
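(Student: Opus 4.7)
My plan is to discretize the unit square by a fine tiling and apply a Peierls-style bound via planar site-percolation duality.

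First, I would tile $S$ by $s \times s$ cells of side $s := r/C$ for a sufficiently large absolute constant $C$ (say $C = 8$) chosen so that (i) any two points lying in the same cell or in two cells sharing an edge or a corner are at Euclidean distance at most $r/2$, and (ii) any point in a cell touching the boundary of a slab lies within distance $r/5$ of that boundary. Call a cell \emph{open} if it contains at least one vertex of $R$ and \emph{closed} otherwise. For a fixed horizontal slab $L$ (width $1$, height $W$), condition (i) ensures that any 8-connected chain of open cells of $L$ joining its leftmost column of cells to its rightmost column yields, by selecting one vertex from each such cell, a valid left-to-right crossing in the sense of Definition~\ref{def:spanning}; condition (ii) takes care of the $r/5$ requirement at the slab's left and right sides. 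Thus it suffices to establish the existence of such an open chain a.a.s., and analogously for vertical slabs.

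By the standard matching-lattice duality for site percolation in the plane, the nonexistence of a horizontal 8-connected open chain in $L$ is equivalent to the existence of a 4-connected self-avoiding path of closed cells joining the top row of $L$ to its bottom row. Any such dual path visits at least $k := W/s = CM\log N$ pairwise disjoint cells.

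The key probabilistic estimate is that, for any $k$ fixed disjoint $s\times s$ cells, the probability that all are empty equals $(1-ks^2)^N \leq \exp(-Nks^2) = \exp\bigl(-M^2 \log N/(\pi C)\bigr) = N^{-M^2/(\pi C)}$, where $M = \pi N r^2$. The number of 4-connected self-avoiding walks of length $k$ rooted at a given cell is at most $4\cdot 3^{k-1} = N^{CM\log 3 + o(1)}$. Taking the union bound over the $1/s = O(\sqrt{N})$ cells in the top row of $L$, and then over the $2W^{-1} = O(\sqrt{N}/\log N)$ horizontal and vertical slabs of $S$, the total failure probability is at most
\[
N^{\,1 + CM\log 3 - M^2/(\pi C) + o(1)}.
\]
Choosing $a_{\textnormal{large}}$ large enough that $M \geq a_{\textnormal{large}}$ forces $M^2/(\pi C) - CM\log 3 \geq 2$; then this bound is $O(N^{-1})$, as required.

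The main technical subtlety is the planar-duality step, which relies on the matching pair of adjacencies (8-adjacency for open cells, 4-adjacency for closed cells) and requires some care with the slab boundary; this is classical and can be cited from Grimmett's \emph{Percolation}. A minor issue is that $s$ need not divide $1$ or $W$ exactly, but this is absorbed into the $O(\cdot)$ estimates by a harmless rounding (or by enlarging the slab by $O(s)$).
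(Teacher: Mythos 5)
Your proof is correct, and it takes a genuinely more self-contained route than the paper's. Both arguments share the same skeleton: tile the slab by small cells ($r/5$ squares with 4-adjacency in the paper, $r/C$ squares with 8-adjacency for you), observe that a chain of occupied cells yields a crossing in the sense of Definition~\ref{def:spanning}, and finish with a union bound over the $O(\sqrt{N}/\log N)$ slabs. The difference is the key probabilistic estimate: the paper black-boxes the per-slab crossing probability by citing Ganesan's Lemma~2, which gives failure probability $O(N^{1/2-cM^2})$ for an unspecified constant $c$ (and then sets $a_{\textnormal{large}}=2/\sqrt{c}$), whereas you reprove that estimate from scratch via matching-lattice duality and a Peierls count, obtaining the explicit per-slab failure bound $N^{1/2+CM\log 3-M^2/(\pi C)+o(1)}$, which has the same $e^{-\Omega(M^2)\log N}$ shape. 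What your version buys is an explicit, citation-free value of $a_{\textnormal{large}}$; what it costs is carrying the duality machinery, which you rightly flag as the delicate step. Two small points to tighten in a careful writeup: (i) a closed top-to-bottom dual path may have length $\ell>k$, so you should either sum the Peierls series over all $\ell\geq k$ (which converges once $Ns^2=M/(\pi C^2)>\log 3$, automatic for $M\geq a_{\textnormal{large}}$) or, more cleanly, note that any such path contains a closed self-avoiding initial segment of length exactly $k$ starting in the top row, so your count of length-$k$ walks already suffices; (ii) the duality statement must be the finite-rectangle version with the correct matching pair (8-adjacency for the open crossing, 4-adjacency for the closed blocking path), exactly as you have it. Neither point is a gap, just wording.
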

\begin{figure}
\begin{center}
\includegraphics[width=0.5\textwidth]{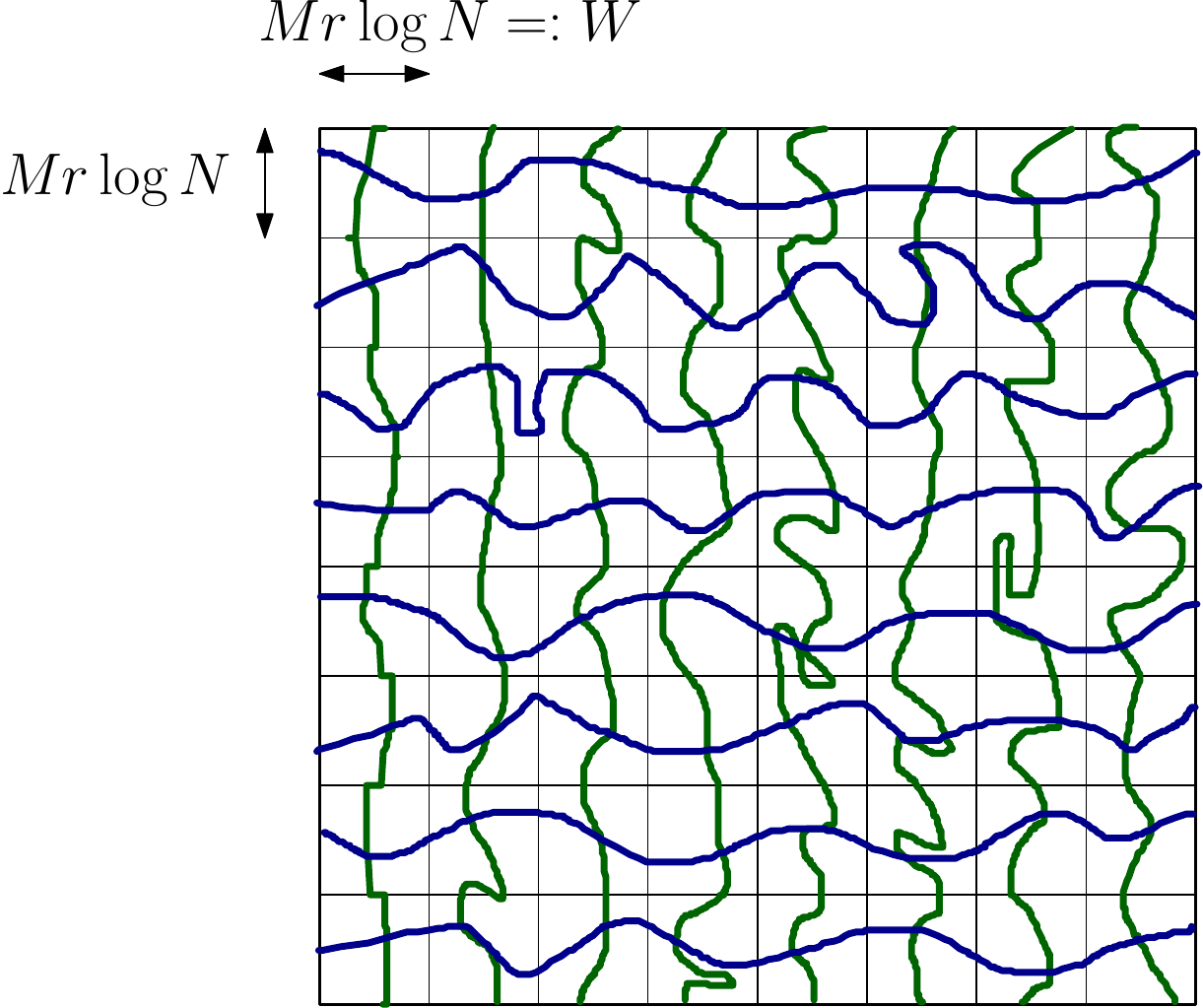}
\caption{A spanning component: left-to-right crossings are blue, top-to-bottom crossings are green.}
\label{spanning component}
\end{center}
\end{figure}
\begin{proof}
Assume that the unit square $S$ is subdivided into $r/5 \times r/5$ subsquares.
Two subsquares are called \emph{adjacent} if they share a side. 
A \emph{path of subsquares} is a sequence of distinct subsquares so that each consecutive pair is adjacent.
A subsquare is called {\em{occupied}} if it contains a vertex of $R$, and {\em{empty}} otherwise. 
For a given rectangle $L$
that is tiled perfectly by subsquares, a \emph{left-to-right subsquare crossing} is a path of occupied subsquares inside $L$ with one endpoint touching the left side of $L$ and the other endpoint touching its right side.
A \emph{top-to-bottom subsquare crossing} is defined similarly.
Note that a left-to-right subsquare crossing 
indeed gives a left-to-right crossing as defined in
Definition~\ref{def:spanning},
by considering the vertices inside the corresponding subsquares
(here we used the fact that, if we have two points lying in side by side squares of side length $r/5$, the distance between the points is less than $r/2$).

Let $M \coloneqq \pi N r^2$.
Ganesan~\cite[Lemma~2]{ganes13}
proved that there exists an absolute constant $c>0$ such that each of the horizontal (vertical) slabs (defined in Definition~\ref{def:spanning}) has a left-to-right (top-to-bottom) subsquare crossing with probability at least 
$1 - O\left(N^{1/2 - c M^2}\right)$.
The total number of slabs is 
$$
2/(Mr \log N)=
O \left( \sqrt N / \log N\right),
$$
so if $M \geq a_{\textnormal{large}}
\coloneqq 2/\sqrt c$,
then by the union bound, with probability at least $1-O\left( N^{-1} \right)$ all horizontal (vertical) slabs have a left-to-right (top-to-bottom) crossing. 
\end{proof}

For two vertices $u$ and $v$ of a random geometric graph $R$,
we denote their Euclidean distance 
and graph distance by 
$d_E(u,v)$ and $d_R(u,v)$, respectively.
The following result was proved by
Friedrich, Sauerwald, and Stauffer
(see Theorem~3 and Remark~5 in~\cite{rgg}).

\begin{theorem}[\cite{rgg}]
\label{lem:rgg}
There exist absolute constants $\Gamma$ and $\eta$ such that a random geometric graph $R=RGG(N,r)$ 
with $\pi N r^2 > 5$
satisfies the following property with probability at least $ 1 - O(N^{-1})$:
for any two vertices $u,v$ that are in the same connected component of $R$
and have
$d_E(u,v) \geq \Gamma (\log N) / (rN)$,
we have
$$
d_R(u,v) \leq \eta d_E(u,v) /  r.
$$
\end{theorem}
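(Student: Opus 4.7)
The lower bound $d_R(u,v)\geq d_E(u,v)/r$ is immediate, since each edge of $R$ has Euclidean length at most $r$, so a path of combinatorial length $k$ covers Euclidean distance at most $kr$. Only the upper bound is non-trivial, and I would prove it by a renormalisation/percolation argument on the unit square $S$.

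First, tile $S$ into cells of side length $\beta r$ for a constant $\beta$ to be chosen, and call a cell \emph{good} if it is dense enough (for instance, contains at least one vertex of $R$, or more robustly at least some fixed number of vertices). The number of vertices in a cell is Binomial$(N,\beta^2 r^2)$, so since $M \coloneqq \pi N r^2 > 5$ sits comfortably above the continuum percolation threshold $M_c\approx 4.52$, one can choose $\beta$ so that (i) any two points in adjacent good cells lie within Euclidean distance $r$, hence are joined by an edge of $R$, and (ii) after possibly coarsening to blocks of several cells, the good cells stochastically dominate a Bernoulli site percolation on $\mathbb{Z}^2$ with density strictly above the critical threshold $p_c^{\mathrm{site}}\approx 0.593$.

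Next, for each pair $u,v$ with $d_E(u,v)\geq \Gamma(\log N)/(rN)$, surround the segment $\overline{uv}$ by a thin tube $T$ of dimensions $d_E(u,v) \times C(\log N)/(rN)$, with $C$ sufficiently large. The tube contains $\Theta(\log N)$ rows of cells in the short direction, so by a standard planar duality / Peierls contour argument for supercritical site percolation, $T$ fails to admit a left-to-right chain of good cells with probability only $O(N^{-3})$, since any dual blocking path would have to traverse $\Omega(\log N)$ cells in the short direction. Such a good-cell crossing uses $O(d_E(u,v)/r)$ cells and, by property (i) of the tiling, translates into a path in $R$ of length $O(d_E(u,v)/r)$. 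Finally, one must attach $u$ and $v$ to this crossing: uniqueness of the supercritical cluster forces any component of $R$ that contains two vertices at Euclidean distance exceeding $\Gamma(\log N)/(rN)$ to be the giant cluster, which also contains the crossing, and a local argument in discs of radius $O(r)$ around $u$ and $v$ produces connectors of length $O(1)$. A union bound over the $\binom{N}{2}$ pairs then yields the claimed $1 - O(N^{-1})$ guarantee.

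The main obstacle is the sharpness of the percolation crossing estimate: beating the $N^2$ union bound requires failure probability $o(N^{-2})$, which is exactly what forces the logarithmic width of the tube and hence the separation hypothesis $d_E(u,v) \geq \Gamma(\log N)/(rN)$. A secondary subtlety is that the constant $5$ in the hypothesis $\pi N r^2 > 5$ is only marginally above $M_c$, so the cell size $\beta$ and the (possibly multi-scale) block structure must be chosen carefully to make the renormalised process genuinely supercritical while simultaneously preserving the coupling between adjacency of good cells and edges of $R$; handling the hook-up of $u$ and $v$ without accumulating a $\log N$ factor is the other point that needs genuine care.
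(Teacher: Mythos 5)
First, a point of reference: the paper does not prove this statement at all --- it is imported verbatim from Friedrich, Sauerwald and Stauffer (Theorem~3 and Remark~5 of \cite{rgg}), so there is no internal proof to compare yours against; your sketch has to stand on its own. As a top-level strategy (tiling into cells of side $\Theta(r)$, renormalising to supercritical site percolation, and running crossing arguments in tubes of width $\Theta(r\log N)$ so that a Peierls bound beats the union bound over pairs) it is the right kind of argument and is in the spirit of what the cited paper actually does. But two of the steps you pass over quickly are precisely where the theorem lives, and as written they do not go through.

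The first and most serious gap is the sentence ``such a good-cell crossing uses $O(d_E(u,v)/r)$ cells.'' A Peierls/duality argument gives the \emph{existence} of a left-to-right chain of good cells in the tube, but says nothing about its length: a crossing of a tube with $\Theta(d_E/r)$ columns and $\Theta(\log N)$ rows can meander and use $\Theta\bigl((d_E/r)\log N\bigr)$ cells, which only yields $d_R(u,v)=O(d_E(u,v)\log N/r)$ --- a strictly weaker statement (and one that would not even need the tube to be thin). Obtaining constant stretch is the heart of the theorem and requires a genuine chemical-distance argument (an Antal--Pisztora-type estimate, or a shortest-crossing bound after pushing the renormalised density close to $1$); nothing in your sketch supplies it. The second gap is the hook-up of $u$ and $v$: ``a local argument in discs of radius $O(r)$ produces connectors of length $O(1)$'' is false in general. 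A vertex of the giant component need not lie within $O(r)$ of the crossing backbone, and even a vertex at Euclidean distance $O(r\log N)$ from it may be attached only through a long dangling tendril, so bounding the connector length (by $O(\log N)$, which is what the hypothesis $d_E(u,v)\ge\Gamma(\log N)/(rN)$ is there to absorb) needs its own probabilistic argument, not a deterministic local one. Finally, a smaller but real issue: with $\pi Nr^2$ only slightly above $5$, a cell small enough that adjacent occupied cells force an edge of $R$ (side about $r/\sqrt5$) is occupied with probability roughly $1-e^{-1/\pi}\approx 0.27$, far below $p_c^{\mathrm{site}}$, and your fallback --- coarse blocks plus the numerical values $M_c\approx 4.52$ and $p_c^{\mathrm{site}}\approx 0.593$ --- rests on non-rigorous estimates of these thresholds. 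So the proposal identifies the correct architecture but leaves unproved exactly the three points (linear-length crossings, attachment of the endpoints, and supercriticality at the stated constant) that make the result nontrivial.
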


\begin{definition}\label{def:nice}
Let $\Gamma$ and $\eta$ be the constants in Theorem~\ref{lem:rgg}.
We say that a random geometric graph $R=RGG(N,r)$ is \emph{nice} if
\begin{itemize}
\item[(i)]
$R$ has a spanning component, and
\item[(ii)]
for any two vertices $u,v$ of $R$
that are connected by a path
and have
$d_E(u,v) \geq \Gamma \log N / (rN)$,
we have
$
d_R(u,v) \leq \eta d_E(u,v) /  r
$.
\end{itemize}
\end{definition}

\begin{lemma}
\label{lem:union}
Suppose that $A_2 \geq \max\{5,a_{\textnormal{large}}\}$.
Let $k$ be the smallest integer such that $n2^{-k}\leq \log n$.
A.a.s.\ we have that 
all random geometric graphs 
$R_n, R_{n/2},\dots, R_{n2^{-k}}$
are nice.
\end{lemma}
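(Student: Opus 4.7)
The plan is a straightforward union bound: for each index $t \in \{n, n/2, n/4, \ldots, n2^{-k}\}$ I will argue that $R_t$ is nice with failure probability $O(t^{-1})$, and then sum these failures over the $O(\log n)$ values of $t$. The key observation is that the sizes $t$ form a geometric sequence whose smallest term is at least comparable to $\log n$, so the sum of $t^{-1}$ is $O(1/\log n) = o(1)$.

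First I would note that because the SPA model chooses each vertex independently and uniformly from $S$, the graph $R_t$ defined by connecting two of the first $t$ vertices iff they lie within distance $r_t = \sqrt{A_2/(t\pi)}$ is distributed exactly as $RGG(t, r_t)$. The assumption $A_2 \geq \max\{5, a_{\textnormal{large}}\}$ gives $\pi t r_t^2 = A_2 \geq \max\{5, a_{\textnormal{large}}\}$, so the hypotheses of both Lemma~\ref{lem:spanning} and Theorem~\ref{lem:rgg} are satisfied for every $t$ in our range. Applying Lemma~\ref{lem:spanning} to $R_t$ yields property~(i) of niceness with probability $\geq 1 - O(t^{-1})$, and applying Theorem~\ref{lem:rgg} to $R_t$ yields property~(ii) with probability $\geq 1 - O(t^{-1})$. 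A union bound over these two events shows that $R_t$ is nice with probability $\geq 1 - O(t^{-1})$.

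Next I would bound the total failure probability. Summing over $i = 0, 1, \ldots, k$ the probability that $R_{n2^{-i}}$ fails to be nice gives
$$
\sum_{i=0}^{k} O\!\left(\tfrac{1}{n2^{-i}}\right)
= O\!\left(\tfrac{1}{n}\right)\sum_{i=0}^{k} 2^{i}
= O\!\left(\tfrac{2^{k+1}}{n}\right).
$$
By the minimality of $k$ we have $n2^{-(k-1)} > \log n$, hence $2^{k+1}/n < 4/\log n$, so the total failure probability is $O(1/\log n) = o(1)$. This gives the a.a.s.\ statement. Note that the events across different $t$ are certainly not independent (all the $R_t$'s share the underlying vertex positions), but this is immaterial for a union bound.

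The only mildly delicate point is that Lemma~\ref{lem:spanning} implicitly requires $W = Mr\log N$ to satisfy that $W^{-1}$ is an integer in its setup of slabs; for each $t$ we may need to perturb the slab width by a multiplicative factor of $1 + o(1)$ to achieve this, which has no effect on the crossing analysis or on the constants $\Gamma, \eta$ of Theorem~\ref{lem:rgg}. I do not expect any genuine obstacle here; the heart of the lemma is just the geometric-series bookkeeping confirming that even the smallest graph in the sequence, which has size roughly $\log n$, contributes only $O(1/\log n)$ to the error.
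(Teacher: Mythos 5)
Your proposal is correct and follows essentially the same route as the paper: apply Lemma~\ref{lem:spanning} and Theorem~\ref{lem:rgg} to conclude each $R_j$ is nice with probability $1-O(j^{-1})$, then union bound over the geometric sequence of indices, using the minimality of $k$ to bound the resulting sum by $O(1/\log n)$. The additional remarks (that $\pi t r_t^2 = A_2$ verifies the hypotheses, and the integrality caveat for $W^{-1}$) are fine but not needed beyond what the paper already assumes.
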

\begin{proof}
By the definition of $k$ we have
$n2^{1-k} > \log n$
and so
$2^{k}< 2n/\log n$.
For each $j \in \{n, n/2, \dots, n2^{-k}\}$,
by Lemma~\ref{lem:spanning} and Theorem~\ref{lem:rgg}
and the union bound,
$R_j$ is nice with probability at least
$1 - O(j^{-1})$.
By the union bound, the probability that at least one of these is not nice is bounded from above by
\begin{align*}
\sum_{i=0}^{k} (n/2^i)^{-1}
=
\frac{1}{n}
\sum_{i=0}^{k} 2^i
=\frac{2^{k}-1}{n}
<{2/\log n} = o(1),
\end{align*}
as required.
\end{proof}

The following lemma is a purely deterministic one, and is the geometric core of our argument.

\begin{lemma}
\label{lem:geometriccore}
Suppose that $(\log n) / 2 \leq t \leq n/2$ and that $R_t$ and $R_{2t}$ are nice.
Let $v$ be a vertex in the spanning component of $R_{2t}$. 
Then there exists a path of length $O(\log t)$ in $R_{2t}$ from $v$ to the spanning component of $R_{t}$.
\end{lemma}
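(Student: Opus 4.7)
The plan is to exhibit a vertex $u$ in the spanning component of $R_t$ that is Euclidean-close to $v$ and lies in the same $R_{2t}$-component as $v$, and then apply the niceness of $R_{2t}$ to bound the graph distance.

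First, let $L$ be an interior horizontal slab of $R_t$ close to $v$ (the one containing $v$ if $v$ is in an interior slab, or an adjacent one otherwise; the vertical distance from $v$ to $L$ is at most $W(t)$). By the niceness of $R_t$, there is a left-to-right crossing $H_t=v_0v_1\cdots v_k$ in $L$. The key elementary observation is that consecutive vertices of $H_t$ have Euclidean distance at most $r_t/2$, strictly less than $r_{2t}=r_t/\sqrt{2}$; hence every edge of $H_t$ is also an edge of $R_{2t}$, so $H_t$ is a path in $R_{2t}$.

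Next, I would show that $H_t$ lies in the spanning component of $R_{2t}$. Pick any interior vertical crossing $V_{2t}$ of $R_{2t}$, whose slab $L'$ is disjoint from the left and right boundary of the unit square. Since both $L$ and $L'$ are interior, the rectangle $L\cap L'$ (of dimensions $\Theta(r_{2t}\log t)\times\Theta(r_{2t}\log t)$) is crossed by $H_t$ from left to right and by $V_{2t}$ from top to bottom, so by a Jordan-curve-type argument the two polygonal curves must intersect geometrically. Taking an edge $aa'$ of $H_t$ and an edge $bb'$ of $V_{2t}$ that meet at a point $p$, the midpoint estimates $|aa'|\le r_t/2$ and $|bb'|\le r_{2t}/2$ force some pair of endpoints to have Euclidean distance at most $r_t/4+r_{2t}/4<r_{2t}$, giving an edge of $R_{2t}$ between $H_t$ and $V_{2t}$.

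It remains to choose $u$. Let $D^{*}=\Gamma\log(2t)/(r_{2t}\cdot 2t)$ be the niceness threshold of $R_{2t}$. A direct computation using $r_{2t}^2=A_2/(2t\pi)$ shows $D^{*}/r_{2t}=\Theta(\log t)$, so $D^{*}$ is much larger than the step length $r_t/2$ of $H_t$ and is of the same order as the slab height $W(t)$. Therefore I can pick $u\in V(H_t)$ whose horizontal coordinate differs from that of $v$ by about $2D^{*}$, giving $D^{*}\le d_E(u,v)=O(D^{*})$; the niceness of $R_{2t}$ then yields
\[
d_{R_{2t}}(u,v)\le \eta\, d_E(u,v)/r_{2t}=O(D^{*}/r_{2t})=O(\log t).
\]
Since $u$ lies on a crossing of $R_t$ it belongs to the spanning component of $R_t$, finishing the proof. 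The main obstacle is the geometric step establishing that $H_t$ shares a component of $R_{2t}$ with the spanning component of $R_{2t}$; the Jordan-curve argument requires care in choosing both $L$ and $L'$ away from the boundary of the unit square.
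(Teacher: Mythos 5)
Your proof is correct and shares the paper's overall architecture --- target a crossing of $R_t$ (whose vertices lie in the spanning component of $R_t$ and whose edges, having length at most $r_t/2 < r_{2t}$, survive in $R_{2t}$), then invoke niceness (ii) of $R_{2t}$ at Euclidean scale $\Theta(\log t/\sqrt{t})$ --- but your connectivity step is genuinely different from the paper's. The paper encloses $v$ in a cycle built from four $R_t$-crossings (those of horizontal slabs $a\pm k$ and vertical slabs $b\pm k$ for a constant $k$), argues that a path of the spanning component of $R_{2t}$ starting at $v$ must escape this cycle of diameter $2kW=o(1)$, and extracts an $R_{2t}$-edge where the escaping path crosses the cycle, via the triangle inequality $d_E(x_1,u_i)+d_E(x_2,u_{i+1})\le r_{2t}+r_t/2\le 2r_{2t}$. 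You instead intersect a single horizontal $R_t$-crossing near $v$ with an interior vertical $R_{2t}$-crossing (which already lies in $v$'s component, being part of the spanning component of $R_{2t}$), and extract the bridging edge at the geometric intersection point via $r_t/4+r_{2t}/4<r_{2t}$. Your route dispenses with the enclosing cycle and the escape argument, at the price of handling boundary slabs explicitly --- which you do; the paper's construction has the analogous, unaddressed, requirement that the slabs $a\pm k$ and $b\pm k$ exist. Both versions ultimately rest on the same standard topological fact that a left-to-right arc and a top-to-bottom arc of a rectangle must meet. The one detail left implicit in your plan --- that a vertex $u$ of $H_t$ with $D^*\le d_E(u,v)=O(D^*)$ actually exists --- is easily supplied: the horizontal coordinate along $H_t$ sweeps essentially the whole unit interval in steps of length at most $r_t/2=o(D^*)$, while the vertical offset from $v$ to any point of $H_t$ is at most $2W(t)=O(D^*)$.
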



\begin{proof}
Let $M \coloneqq \pi t r_t^2$, $W \coloneqq M r_t \log t$,
and consider the slabs of size $1\times W$ and $W \times 1$ defined in Definition~\ref{def:spanning}.
Since $R_t$ is nice, it has a spanning component, so each horizontal (vertical) slab
has a left-to-right (top-to-bottom) subsquare crossing.
Enumerate the slabs from left to right,
and from top to bottom.
Suppose $v$ lies in the $a$-th horizontal slab
and the $b$-th vertical slab.

Let $k \coloneqq \left\lceil 2 + 3  \Gamma / (\pi t^2 r_t^4)\right\rceil$.
The left-to-right crossings of slabs
$a-k$ and $a+k$
and the top-to-bottom crossings of slabs
$b-k$ and $b+k$
constitute a cycle 
$u_1 u_2 \dots u_{\ell}$ in $R_t$, enclosing $v$,
such that for all $i=1,2,\dots,\ell$,
we have
\begin{enumerate}[(i)]
\item
$d_E(u_i,u_{i+1})\leq r_t/2$ (by the definition of a crossing)
and 
\item
$
(k-1) W\leq
d_E(v,u_i) \leq
2k W
$ .
\end{enumerate}
See Figure~\ref{fig:cycle}.
\begin{figure}
\begin{center}
\includegraphics[width=0.4\textwidth]{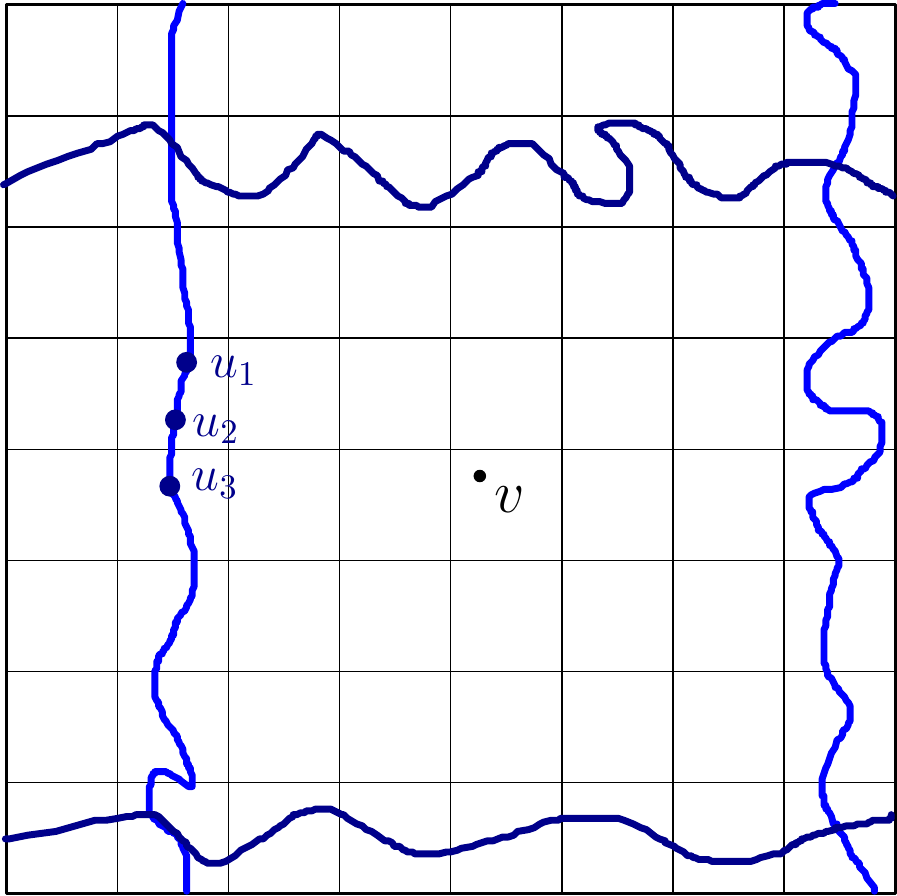}
\caption{Illustration for the proof of Lemma~\ref{lem:geometriccore}, with $k=3$: the crossings constitute a cycle $u_1u_2\dots$ enclosing the vertex $v$.}
\label{fig:cycle}
\end{center}
\end{figure}

We claim that there exists a path in $R_{2t}$ from $v$ to some $u_i$ in the cycle.
Let us show first that this claim completes the proof.
{Since $k> 2$, we have
\(k-1 > k/2 > 3 \Gamma / (2\pi t^2 r_t^4).\)
By definition of $W$ and $M$, we have $W = M r_t \log t = \pi t r_t^3  \log t$.
Also, $r_{2t}=r_{2t}/\sqrt{2}$ and $\log(2t) \leq 2 \log t$ for $t\geq2$.
Therefore,
$$\frac{\Gamma \log (2t)}{2 t r_{2t}}
\leq 
\frac{\sqrt 2 \Gamma \log t}{t r_{t}} <
\frac{3 \Gamma}{2\pi t^2 r_t^4} \times \pi t r_t^3  \log t
< (k-1)W \:.$$
Using property (ii) above, we find that 
$${\Gamma \log (2t)}/{(2 t r_{2t})} <(k-1)W \leq d_E(v,u_i)\:.$$
Since $R_{2t}$ is nice, by property (ii) in Definition~\ref{def:nice} we have that
$$d_{R_{2t}}(v,u_i) \leq 
\frac{\eta d_E(v,u_i)}{r_{2t}} \leq
\frac{\eta 2 k W}{r_{2t}} =
O\left( \frac{ kW}{r_t} \right)
=
O\left( \frac{ k \pi t r_t^3  \log t}{r_t} \right)
=
O\left({ k t r_t^2  \log t} \right) = O(\log t)\:,$$
proving the lemma.
For the last equality we used the fact that $t r_t^2 = A_2/\pi = O(1)$ and also
$k \leq 3 + 3 \Gamma / \pi t^2 r_t^4
\leq 3 + 3 \pi \Gamma / A_2^2 = O(1)$.}

So it remains to show that,
there exists a path in $R_{2t}$ from $v$ to some $u_i$ in the cycle.
{Note that 
$$W = M r_t \log t = \pi t r_t^3  \log t = \pi t \log t \left( \frac{A_2}{t\pi}\right)^{3/2} = o(1)\:,$$
and since $k=O(1)$ as we saw above, we find that $kW = o(1)$.}
Since all vertices of the cycle lie within
distance
$2k W =o(1)$
of $v$,
and since $v$ lies in a spanning component of $R_{2t}$, there is a path $\zeta$ in $R_{2t}$ from $v$ to a vertex not enclosed by the cycle. 
Let $x_1$ be the last vertex in $\zeta$ enclosed by the cycle,
and let $x_2$ be the first vertex in $\zeta$ not enclosed by the cycle. 
If one of $x_1$ and $x_2$ is on the cycle,
then the claim is proved.
Otherwise, the line segment $x_1x_2$ crosses
some edge $u_i u_{i+1}$ of the cycle.
By the triangle inequality,
$$d_E(x_1,u_i)+d_E(x_2,u_{i+1})
\leq d_E(x_1,x_2)+d_E(u_i,u_{i+1})
\leq r_{2t} + r_t/2 \leq 2 r_{2t},$$
which implies that
at least one of $x_1 u_i$ 
and $x_2 u_{i+1}$ is an edge in $R_{2t}$.
Therefore, there exists a path in $R_{2t}$ from $v$ to at least one of $u_i$ and $u_{i+1}$, proving the claim.
\end{proof}

The following theorem about the size of the giant component  follows directly from Theorem~10.9 and Proposition 9.21 in \cite{penrose03}.

\begin{theorem}[\cite{penrose03}]
\label{thm:giant}
There exists a constant $a_{\textnormal{big}}$ so that \aas a random geometric graph $RGG(N,r)$  with $\pi N r  ^2  >a_{\textnormal{big}}$ has a connected component containing at least $0.99 N$ of its vertices.
\end{theorem}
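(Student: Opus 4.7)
My plan is to establish the theorem via a standard coupling between the random geometric graph and site percolation on $\mathbb{Z}^2$, following the strategy developed in~\cite{penrose03}. First, I would partition the torus $S$ into subsquares of side length $\ell \asymp r/\sqrt 5$, chosen so that any two points lying in subsquares that are $8$-neighbors (king's-move adjacent) are at Euclidean distance at most $r$ and hence joined by an edge in $R$. Call a subsquare \emph{occupied} if it contains at least one vertex of $R$. Since the $N$ vertices are i.i.d.\ uniform, a given subsquare is occupied with probability $q \coloneqq 1 - (1-\ell^2)^N \geq 1 - e^{-N\ell^2}$; writing $a = \pi N r^2$, this gives $q \geq 1 - e^{-a/(5\pi)}$, so $q \to 1$ as $a \to \infty$.

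Although the occupation indicators on disjoint subsquares are not independent, a stochastic-domination theorem (essentially~\cite[Proposition~9.21]{penrose03}, in the spirit of the Liggett--Schonmann--Stacey comparison) shows that they dominate from below an i.i.d.\ Bernoulli site percolation of parameter $q' = q - o(1)$ on $\mathbb{Z}^2$ with $8$-neighbor adjacency. It is classical (via a Peierls contour argument on the dual lattice) that this percolation has a critical threshold strictly below $1$, and that the density $\theta(p)$ of the giant cluster tends to $1$ as $p \to 1$. Choosing $a_{\textnormal{big}}$ large enough therefore arranges $\theta(q') > 0.995$. All occupied subsquares in the giant cluster of the dominated percolation are pairwise path-connected in $R$, so they lie in a single connected component of $R$. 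A second-moment argument, using the fact that the events of distinct vertices landing in giant-cluster subsquares are only weakly (negatively) correlated, shows that a.a.s.\ the number of vertices in this component is at least $0.99 N$, yielding the theorem.

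The main obstacle is the stochastic-domination step: because exactly $N$ vertices are placed in $S$, the joint occupation law over disjoint subsquares is not a product measure (there is a weak global negative dependence coming from the fixed total count), and one cannot apply standard percolation theorems to it directly. This is the issue resolved by~\cite[Proposition~9.21]{penrose03}; once the reduction to genuine Bernoulli site percolation is in hand, the remainder is routine percolation theory and is packaged as~\cite[Theorem~10.9]{penrose03}.
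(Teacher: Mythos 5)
Your proposal is correct and matches the paper's treatment: the paper gives no independent proof but simply quotes this as a direct consequence of Theorem~10.9 and Proposition~9.21 of~\cite{penrose03}, exactly the two results your sketch reduces to after the (standard) discretization and comparison with site percolation. The only nitpick is that a side length of $r/\sqrt 5$ guarantees distance at most $r$ only for $4$-neighbour (edge-adjacent) subsquares, not for king's-move adjacency, for which you would need $\ell \leq r/(2\sqrt 2)$; since you wrote $\ell \asymp r/\sqrt 5$ this does not affect the argument.
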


Theorem \ref{thm:main_diam} now follows directly from the following theorem. 

\begin{theorem}
Let $G=G_n$ be a graph produced by the SPA model with parameters $A_1\in [0,1)$, $A_2>\max\{a_{\textnormal{big}},a_{\textnormal{large}}\},m=2,p=1$.
Then \aas $R_n$ has a giant component $C_n$ which contains at least $\sqrt{0.9} n $ vertices, and the diameter of $C_n$ is
$O(\log^2 n)$. 
Therefore, the effective diameter of $G_n$ is $O(\log^2 n)$.
\end{theorem}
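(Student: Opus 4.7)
My plan is to combine the three results assembled in this section: the existence of a giant component in $R_n$ (Theorem~\ref{thm:giant}), the simultaneous niceness of all dyadic scales $R_{n/2^i}$ (Lemma~\ref{lem:union}), and the scale-hop furnished by Lemma~\ref{lem:geometriccore}. The underlying idea, sketched earlier in this section, is to hop from the spanning component of $R_n$ to that of $R_{n/2}$, then to $R_{n/4}$, and so on, until reaching a scale $R_{n/2^k}$ whose entire vertex set has size $O(\log n)$ and hence trivially small diameter.

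I would begin by setting up the good event. Since $A_2>a_{\textnormal{big}}$ and $A_2>a_{\textnormal{large}}$ (and one may assume $a_{\textnormal{big}}\geq 5$ by enlarging the constant if necessary so that Theorem~\ref{lem:rgg} applies at every level), Theorem~\ref{thm:giant} yields a component $C_n$ of $R_n$ with at least $0.99n\geq\sqrt{0.9}\,n$ vertices a.a.s., while Lemma~\ref{lem:union} gives, a.a.s., that each of $R_n, R_{n/2},\dots, R_{n/2^k}$ is nice, where $k$ is the smallest integer with $n/2^k\leq\log n$. By choice of $k$ we have $k=O(\log n)$ and $n/2^k>(\log n)/2$, so the hypothesis $(\log n)/2\leq t\leq n/2$ of Lemma~\ref{lem:geometriccore} holds at every level. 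In particular $R_n$ has a spanning component, and a standard uniqueness argument for the giant of a supercritical random geometric graph (see \cite{penrose03}) shows that a.a.s.\ this spanning component is exactly $C_n$.

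Next I would iterate Lemma~\ref{lem:geometriccore}. Given $v\in C_n$ and applying the lemma successively with $t=n/2, n/4,\dots, n/2^k$, I obtain a sequence of vertices $v=y_0,y_1,\dots,y_k$ such that each $y_i$ lies in the spanning component of $R_{n/2^i}$, and such that $y_{i-1}$ and $y_i$ are joined by a path of length $O(\log(n/2^i))=O(\log n)$ inside $R_{n/2^{i-1}}\subseteq G_n$. Concatenating, $v$ is joined in $G_n$ to a vertex of the terminal spanning component by a path of total length $\sum_{i=1}^k O(\log n)=O(k\log n)=O(\log^2 n)$. That terminal spanning component, living in $R_{n/2^k}$, has at most $\log n$ vertices, so its $G_n$-diameter is trivially $O(\log n)$. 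Joining two such chains shows that any pair of vertices in $C_n$ is at $G_n$-distance $O(\log^2 n)$. Since $|C_n|^2\geq 0.9\,n^2$, at least a $0.9$ fraction of the connected pairs of $V(G_n)$ lie in $C_n\times C_n$ and are thus within this distance, which delivers the effective diameter bound.

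The main obstacle, as I see it, is the inductive bookkeeping for Lemma~\ref{lem:geometriccore}: each $y_i$ must itself be in the spanning component of $R_{n/2^i}$ to serve as a valid starting vertex for the next hop, which is precisely the output of the lemma but requires the simultaneous niceness from Lemma~\ref{lem:union}. A secondary subtlety, worth flagging, is that although the first hop ($t=n/2$) produces a path in $R_{2t}=R_n$, subsequent hops produce paths in $R_{n/2^{i-1}}$ with edges of Euclidean length up to $r_{n/2^{i-1}}>r_n$; the concatenated path therefore lives naturally in $G_n$ (where all the $R_j$ sit as subgraphs) rather than in $R_n$ alone, which is exactly what the effective diameter statement for $G_n$ requires.
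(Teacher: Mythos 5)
Your proposal is correct and follows essentially the same route as the paper's own proof: combine Theorem~\ref{thm:giant} with Lemma~\ref{lem:union} and iterate Lemma~\ref{lem:geometriccore} down the dyadic scales to reach $R_{n2^{-k}}$ of size at most $\log n$, then identify the spanning component with the giant component. Your explicit remark that the concatenated path lives in $G_n$ rather than in $R_n$ alone is a subtlety the paper leaves implicit, and is handled correctly.
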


\begin{proof}
Let $k$ be the smallest integer so that $n2^{-k}\leq \log n$.
On the one hand, by Lemma~\ref{lem:union},
a.a.s.\ we have that 
all random geometric graphs 
$R_n, R_{n/2},\dots, R_{n2^{-k}}$
are nice.
Let $C_n$ denote the spanning component of $R_n$.
Let $y_0$ be a vertex in $C_n$. By repeated application of Lemma~\ref{lem:geometriccore}, \aas there exists a sequence of vertices $y_1,y_2,\dots ,y_k$, 
with the following properties:
for each $i$, $0<i\leq k$, $y_i$ is in the spanning component of $R_{n2^{-i}}$, and there exists a path of length $O(\log n)$ from $y_i$ to $y_{i-1}$. 
Now $R_{n2^{-k}}$ has size at most $\log n$. Therefore, there exists a path of length $O(\log^2 n)$ between any two vertices in $C_n$.

On the other hand, by Theorem~\ref{thm:giant} and the fact that $R_n$ is distributed as $RGG(n,r_n)$ with $r_n=\sqrt{\frac{A_2}{\pi n}}$, we have that \aas $R_n$ contains a giant component which contains at least $0.99 n >\sqrt{0.9}n$ vertices. 
It is easy to see that this giant component must be the same as the spanning component $C_n$,
and this completes the proof.
\end{proof}

The methods used in this section do not suffice to show that the diameter of the giant component of $G_n$ is also logarithmic. In principle, it could be that there exist vertices in $G_n$ that are not contained in the giant component of $R_n$, but that are connected to this component by a long path that uses edges from inside the minor components of the graphs $R_n$, $R_{n/2}$, etc. 
Nevertheless, we believe that the SPA model graphs have logarithmic diameter inside their giant components a.a.s., and we leave this as an open problem.

\section{Lower bounds for rumour spreading}
\label{sec:gossip}
In this section we prove Theorem~\ref{thm:main_push}.
Recall that $m$ denotes the dimension and $n$ denotes the number of vertices.
We will first establish some structural properties of the graph generated by the SPA model,
and then use these to prove results about the rumour spreading protocols.
Let $c_m$ denote the volume of the $m$-dimensional ball of unit radius. 
{
The proof is based on a classification of edges according to their length, and vertices and edges according to their birth time.  

\begin{definition}\label{def:long}
 Let $\tau=\tau(n)=n^\beta$ and  $L=L(n)=n^{-\eta}$, where 
 $\eta, \beta \in (0,1)$ and
\begin{equation}
\eta m <\beta (1-pA_1).\label{Lw}
\end{equation}
Say an edge is \emph{long} if the distance between its endpoints is larger than $L$, and is \emph{short} otherwise.
A vertex/edge is \emph{old} if it was born during one of the rounds $1,2,\dots,\tau$, and is \emph{new} otherwise. 
\end{definition}}

The following lemma establishes properties of old and new vertices and long and short edges. 

\begin{lemma}[Structural properties of the SPA model]
\label{spa_properties}
Let $G$ be a graph generated by the SPA model with $pA_1<1$.
Let $\varepsilon \in (0,1)$ be a constant independent of $n$. Let $\tau =n^{\beta}$ and $L=n^{-\eta}$ be as in Definition \ref{def:long}, and let $y=y(n)$ be a function satisfying
\begin{equation}
y  { =  n^{\gamma}  \:, \mathrm{\ for\ some\ }\gamma\in (\beta,1-\varepsilon/pA_1)}\label{yw}
\end{equation}
A.a.s.\ we have the following properties.
\begin{enumerate}
\item[(a)]
All new edges are short.

\item[(b)]
If $\beta $ and $\gamma$ are such that
\begin{equation}
\beta< (\gamma-\beta)(pA_1)/2,\label{thetacond}
\end{equation} 
then all old vertices $v$ satisfy
\begin{equation}
\label{eq:degcond}
\frac{\deg(v,\tau)}{\deg(v,n)} <
 n^{\varepsilon}\left(\frac{y \log n}n\right)^{pA_1}\:.
\end{equation}
\end{enumerate}

\end{lemma}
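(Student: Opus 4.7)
The two parts are handled separately, and both rely on standard concentration estimates for in-degrees in the SPA model (as established in the cited works, e.g.\ \cite{spa_typical}).

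\emph{Part (a).} I would show that a.a.s., for every $t>\tau$ and every vertex $u$, the sphere of influence $S(u,t)$ has radius strictly less than $L$, so every edge born at time $t$ automatically has length less than $L$. Since the radius equals $(A(u,t)/c_m)^{1/m}$, it suffices to bound $A(u,t)$ from above. The standard SPA in-degree bound gives that a.a.s., for every vertex $u$ born at time $s$ and every $t\le n$, $\deg^-(u,t)=O((t/s)^{pA_1}\log n)$, so
\[
A(u,t)=\frac{A_1\deg^-(u,t)+A_2}{t}=O\!\left(t^{pA_1-1}\log n\right)\le O\!\left(\tau^{pA_1-1}\log n\right)=O\!\left(n^{\beta(pA_1-1)}\log n\right),
\]
using $t>\tau$ and $pA_1<1$ (so that $s=1$ is the worst case and the function of $t$ is decreasing). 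Requiring this to drop below $c_m L^m=\Theta(n^{-\eta m})$ reduces to $\beta(1-pA_1)>\eta m$, which is exactly condition \eqref{Lw}.

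\emph{Part (b).} Fix an old vertex $v$ born at time $s\le\tau$. I would bound $\deg(v,\tau)$ from above and $\deg(v,n)$ from below and then take the ratio. The upper bound $\deg(v,\tau)\le C(\tau/s)^{pA_1}\log n$ is the same estimate used in part (a). For the matching lower bound I would use the intermediate time $y$ in two stages: first apply SPA in-degree concentration at time $y$ to obtain $\deg^-(v,y)\ge c(y/s)^{pA_1}/\log n$, and then use the approximate martingale $\deg^-(v,t)/\prod_{i=y+1}^{t}(1+pA_1/i)$ (arising from the recursion $\mathbb{E}[\deg^-(v,t+1)\mid\mathcal{F}_t]=\deg^-(v,t)(1+pA_1/(t+1))+pA_2/(t+1)$) together with Azuma--Hoeffding to conclude $\deg^-(v,n)\ge c'\deg^-(v,y)\,(n/y)^{pA_1}$ a.a.s. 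Chaining the two stages yields $\deg(v,n)\ge c''(n/s)^{pA_1}/\log^{O(1)}n$, hence
\[
\frac{\deg(v,\tau)}{\deg(v,n)}=O\!\left(\log^{O(1)}n\cdot (\tau/n)^{pA_1}\right).
\]
Comparing with the target $n^{\varepsilon}(y\log n/n)^{pA_1}$, it suffices that the factor $(\tau/y)^{pA_1}=n^{-(\gamma-\beta)pA_1}$ beats $n^{-\varepsilon}$ by the polylogarithmic slack, which follows immediately from $\gamma>\beta$.

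\emph{Main obstacle.} The principal technical challenge is making each concentration estimate strong enough to survive a union bound over the (up to) $n^\beta$ old vertices. Both the upper bound on $\deg(v,\tau)$ and the two-stage lower bound on $\deg(v,n)$ must fail with probability $\ll n^{-\beta}$, and the martingale variance on $[y,n]$ must be controlled in terms of $\deg^-(v,y)$. This is precisely where condition \eqref{thetacond} enters: having $(\gamma-\beta)pA_1/2>\beta$ ensures that the typical in-degree $(y/s)^{pA_1}\ge(y/\tau)^{pA_1}$ is polynomially larger than $n^{2\beta/pA_1}\cdot\ldots$, so the Azuma--Hoeffding failure probability for a single vertex is $n^{-\omega(\beta)}$, small enough to be absorbed by the union bound over old vertices.
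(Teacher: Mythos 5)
Your part~(a) is correct and is essentially the paper's argument: the a.a.s.\ upper bound $\indeg(v_i,t) = O\left((t/i)^{pA_1}\log^{O(1)} n\right)$ bounds the volume of every sphere of influence after time $\tau$ by $O\left(\tau^{pA_1-1}\log^{O(1)}n\right)$, and condition \eqref{Lw} makes this smaller than $c_m L^m$.

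Part~(b), however, has a genuine gap. Your first stage asserts that a.a.s.\ \emph{every} old vertex $v$ (born at time $s\le\tau$) satisfies $\indeg(v,y)\ge c\,(y/s)^{pA_1}/\log n$. No such uniform lower bound holds, and none of the available concentration results provides one: Theorem~\ref{thm:degs} is only an upper bound, and the degree-doubling estimate from \cite{geoSPA} applies only to vertices whose in-degree at the reference time is already $\omega(\log n)$. In preferential-attachment-type processes the normalized in-degree $\indeg(v,t)/(t/s)^{pA_1}$ converges to a nondegenerate random limit that is close to $0$ with probability that is at best polylogarithmically small, not $o(n^{-\beta})$; so among the $n^{\beta}$ old vertices there will typically be many whose in-degree at time $y$ is far below $(y/s)^{pA_1}/\log n$, and your union bound cannot be rescued. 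A symptom of the problem is that your final bound $\deg(v,\tau)/\deg(v,n)=O\left((\tau/n)^{pA_1}\log^{O(1)}n\right)$ is strictly stronger than the claimed \eqref{eq:degcond}; the slack factor $n^{\varepsilon}(y/\tau)^{pA_1}$ in the lemma exists precisely to accommodate atypical old vertices. The paper avoids the issue by never lower-bounding $\indeg(v,y)$ in absolute terms: it lower-bounds the \emph{growth ratio} after time $\tau$, conditioning on the current value $d=\indeg(v,\tau)\ge 1$ and using that each later arrival attaches to $v$ with probability at least $pA_1 d/i$, so that $\indeg(v,y)\ge(1-\delta)pA_1 d\log(y/\tau)$ except with probability $(\tau/y)^{(1+o(1))\delta^2 pA_1 d/2}$ by a Chernoff bound. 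The worst case is $d=1$, and condition \eqref{thetacond} is exactly the requirement that $(\tau/y)^{\delta^2 pA_1/2}=o(n^{-\beta})$ for $\delta$ close to $1$, which is what survives the union bound over the $\tau=n^\beta$ old vertices; two further stages (from $y$ to $y\log n$ by the same argument, then from $y\log n$ to $n$ by the doubling estimate, which is by then applicable) yield \eqref{eq:degcond}. So the role you assign to \eqref{thetacond} (controlling an Azuma failure probability via the size of $(y/s)^{pA_1}$) is not where it actually enters.
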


{By part (a) of the lemma, all edges created after round $\tau$ are short, and thus the left hand side of~(\ref{eq:degcond}) gives an upper bound on the proportion of edges incident with vertex $v$ that are long. By (\ref{yw}), the right hand side of (\ref{eq:degcond}) is $o(1)$. }Therefore,  assertion~\eq{eq:degcond} quantifies the informal statement ``most edges incident to an old vertex are short.''
The proof of the above lemma is somewhat technical and 
can be found at the end of this section.

\begin{theorem}[Main Theorem for rumour spreading]
\label{main_rumourpp}
Let $\tau =n^\beta$ and $L=n^{-\eta}$ be as in Definition \ref{def:long}. Let $\varepsilon \in (0,1)$, and let $y=n^\gamma$ be a function satisfying \eq{yw} and~\eqref{thetacond}. Let $T=T(n)=n^\alpha$, where 
$\alpha$ is such that
\begin{equation}
{\alpha +\beta + (\varepsilon-pA_1)+\gamma pA_1 <0}.
\label{complicatedpp}\:
\end{equation}
Then, if the rumour starts from a uniformly random vertex,  a.a.s.\ after $T$ rounds of the push\&pull protocol, all informed vertices lie within distance $TL$ of the initial vertex.
\end{theorem}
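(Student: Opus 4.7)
The plan is to show that, a.a.s., no long edge is ever used during the first $T$ rounds of the push\&pull protocol. Granted this, the theorem follows at once: the rumour advances by at most one edge-hop per round, and if every hop traverses an edge of Euclidean length at most $L$, then by the triangle inequality all informed vertices remain within Euclidean distance $TL$ of the initial vertex.

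Condition on the a.a.s.\ event, guaranteed by Lemma~\ref{spa_properties}, that properties~(a) and~(b) both hold. Property~(a) implies that every long edge is old, and hence both of its endpoints lie among the first $\tau=n^\beta$ vertices. Thus a long edge can be used in some round only if at least one of its (old) endpoints selects it when choosing its random partner in that round. In push\&pull, every vertex contacts exactly one uniformly random neighbour per round, so for a fixed old vertex $v$ the probability that this neighbour is reached via a long edge equals the fraction of $v$'s edges that are long; since all long edges are old, this fraction is at most $\deg(v,\tau)/\deg(v,n)\leq n^{\varepsilon}(y\log n/n)^{pA_1}$ by~(b).

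Letting $X$ count the (old vertex, round) pairs across the $T=n^\alpha$ rounds in which that vertex picks a long-edge neighbour, linearity of expectation gives
\[
\e{X} \;\leq\; \tau\, T\, n^{\varepsilon}\left(\frac{y\log n}{n}\right)^{pA_1}
\;=\; n^{\alpha+\beta+\varepsilon-pA_1+\gamma pA_1}(\log n)^{pA_1},
\]
which is $o(1)$ precisely by the hypothesis~\eq{complicatedpp}. Markov's inequality then yields $X=0$ a.a.s., so no long edge is ever used during the first $T$ rounds.

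All the genuine technical content is packaged in Lemma~\ref{spa_properties}; the argument above is essentially bookkeeping around a three-factor union bound (old vertices $\times$ rounds $\times$ per-round long-edge probability). The only conceptual point, and the thing most likely to trip one up, is the restriction of the union bound to the $n^\beta$ old vertices: new vertices have no long-edge neighbours at all by property~(a) and so pose no risk, and this restriction is exactly what makes~\eq{complicatedpp} the right condition to close the calculation. Note that the uniformly random starting vertex plays no role in the estimate (the bound holds for every starting vertex), so it could in fact be strengthened to hold for an arbitrary initial vertex.
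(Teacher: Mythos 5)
Your proposal is correct and follows essentially the same route as the paper: condition on properties (a) and (b) of Lemma~\ref{spa_properties}, observe that only old vertices are incident to long edges, bound the per-round probability that an old vertex contacts along a long edge by $n^{\varepsilon}(y\log n/n)^{pA_1}$, and apply a union bound (your Markov/first-moment phrasing is equivalent) over the $\tau$ old vertices and $T$ rounds, with \eq{complicatedpp} making the total $o(1)$. Your closing observations---that the pull moves are also covered because both endpoints of a long edge are old, and that the random choice of starting vertex is not actually needed---are accurate refinements consistent with the paper's argument.
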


\begin{proof}
Let $G$ be a graph generated by the SPA model,
and let $\zeta$ be a uniformly random vertex of it.
We may assume that $G$ satisfies the properties (a) and (b) given in Lemma~\ref{spa_properties}. { Note first that, if the rumour passes only through short edges, then in each time step the rumour can only spread to vertices that are within distance $L$ of any vertex with the rumour. Thus, in $T$ time steps the rumour can only reach vertices within distance $TL$ of the initial vertex.}

Let $B$ denote the bad event ``the rumour passes through a long edge during the first $T$ rounds.''
We need only show that a.a.s.\ $B$ does not happen. 
Note that new vertices are not incident to long edges by Lemma~\ref{spa_properties}(a).
Moreover, by Lemma~\ref{spa_properties}(b) every old vertex $v$ satisfies~\eq{eq:degcond}, which guarantees that most edges incident to $v$ are short.
Condition~\eq{eq:degcond} implies that the probability that an old vertex contacts a neighbour along some long edge in a given round is smaller than $n^{\varepsilon}\left(\frac{y \log n}n\right)^{pA_1}$.
There are exactly $\tau$ old vertices.
By the union bound over all old vertices and over the rounds $1$ to $T$, 
we find that 
\begin{align*}
\p{B} \leq \tau T n^{\varepsilon}\left(\frac{y \log n}n\right)^{pA_1}
 & {=n^{\beta +\alpha+\varepsilon +(\gamma -1)pA_1}\log^{pA_1} n =o(1) }& \mathrm{by\ \eq{complicatedpp}. }\qquad\qquad \qedhere
\end{align*}
\end{proof}



\begin{proof}[Proof of Theorem~\ref{thm:main_push}]
Set 
\begin{align*}
\delta& \coloneqq a(1-a)/K-\alpha > 0,\\
\tau & \coloneqq n^{m a / K},\\
y & \coloneqq n^{m(2+a)/K + \delta},\\
L & \coloneqq n^{-a(1-a)/K + \delta/2}, \textnormal{and }\\
T & \coloneqq n^{\alpha} = n^{a(1-a)/K-\delta},\\
\eps & \coloneqq \min \{am/K, \delta(1+a)\} / 2\:,
\end{align*}
and observe that \eq{Lw}, \eq{yw}, \eq{thetacond} and~\eq{complicatedpp} are satisfied, 
{ and that $TL=o(1)$}.\label{parameterpushpull} 
By Theorem~\ref{main_rumourpp},
a.a.s.\ after $T$ rounds of the push\&pull protocol,
all informed vertices lie in a ball of volume $O\left((TL)^m\right)=o(1)$.
By a standard Chernoff bound,
a.a.s.\ the number of vertices in any such ball is $O(n(TL)^m)=o(n)$.
\end{proof}

In the rest of this section we prove Lemma~\ref{spa_properties}.
We will use the following two concentration bounds, sometimes called multiplicative Chernoff bounds (see, e.g., \cite[Theorem 2.3(b,c)]{concentration}).
Let $X$ be a sum of independent indicator random variables and let $\delta\ge 0$. We have
\begin{equation}
\p{X \le (1-\delta) \e{X}} \le \exp(-\delta^2 \e{X}/2) 
\label{chernoff_lower_tail}
\end{equation}
and
\begin{equation}
\p{X \ge (1+\delta) \e{X}} \le
\exp\left(-\frac{\delta^2\e{X}}{2+2\delta/3}\right)
.
\label{chernoff_upper_tail}
\end{equation}

We will use the following theorem from~\cite{geoSPA}.

\begin{theorem}[Theorem~5.2 in~\cite{geoSPA}]
Let $f(n)$ be any function tending to infinity with $n$.
Let $v$ be a vertex with $\indeg(v,R)\geq f(n)\log n$.
Then, with probability at least $1-O(n^{-4/3})$, for all $r\in\{R,R+1,\dots,2R\}$ we have
$$\left| \indeg(v,r) - \indeg(v,R)(r/R)^{pA_1} \right| \leq \frac{2 r \sqrt{ \indeg(v,R) \log n}}{p A_1 R} \:.$$
\end{theorem}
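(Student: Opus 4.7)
My plan is a martingale concentration argument. Conditioning on $\mathcal{F}_R$, so that $D\coloneqq\indeg(v,R)\geq f(n)\log n$ is treated as known, the in-degree increments $X_t\coloneqq \indeg(v,t)-\indeg(v,t-1)$ for $R<t\leq 2R$ are conditionally Bernoulli with mean $q_t = p(A_1\indeg(v,t-1)+A_2)/t$ (provided $A(v,t)\leq 1$, which holds on the high-probability event described below since $pA_1<1$). Setting $\phi(R,t)\coloneqq\prod_{s=R+1}^{t}(1+pA_1/s) = (t/R)^{pA_1}(1+O(1/R))$, a direct calculation verifies that
\[
\mathbb{E}\bigl[\indeg(v,t)+\tfrac{A_2}{A_1}\bigm|\mathcal{F}_{t-1}\bigr] = \bigl(\indeg(v,t-1)+\tfrac{A_2}{A_1}\bigr)\bigl(1+\tfrac{pA_1}{t}\bigr),
\]
so $M_t\coloneqq(\indeg(v,t)+A_2/A_1)/\phi(R,t)$ is a martingale on $\{R,R+1,\dots,2R\}$ with $|M_t-M_{t-1}|\leq 1/\phi(R,t)\leq 1$.

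The next step is to bound the quadratic variation of $M$. First I would establish the a priori bound $\indeg(v,t)\leq 3D(t/R)^{pA_1}$ uniformly on $t\in[R,2R]$ with failure probability $O(n^{-5/3})$, using Doob's maximal inequality on the nonnegative martingale $M_t$ together with a crude self-bounding variance estimate. On this good event, $\operatorname{Var}(M_t-M_{t-1}\mid\mathcal{F}_{t-1})\leq q_t/\phi(R,t)^2 = O(pA_1 D(R/t)^{pA_1}/t)$, and telescoping from $R+1$ to $r\leq 2R$ gives total quadratic variation $V\leq C'D$ for an absolute constant $C'$. Freedman's inequality then yields, for $\lambda = c\sqrt{D\log n}$ with $c$ sufficiently large,
\[
\mathbb{P}\Bigl[\max_{R\leq r\leq 2R}|M_r-M_R|\geq\lambda\Bigr] \leq 2\exp\!\Bigl(-\tfrac{\lambda^2}{2V+2\lambda/3}\Bigr) = O(n^{-4/3}),
\]
since the hypothesis $D\geq f(n)\log n$ forces $\lambda\ll V$, so the Bernstein denominator is $\Theta(D)$. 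Rescaling via $\indeg(v,r)=M_r\phi(R,r)-A_2/A_1$ and using $\phi(R,r)\leq(r/R)^{pA_1}\leq r/R$ (as $pA_1<1$ and $r\geq R$) converts this into $|\indeg(v,r)-D(r/R)^{pA_1}|\leq (2r/(pA_1R))\sqrt{D\log n}$, with the factor $2/(pA_1)$ absorbing $c$, the small discrepancy between $\phi(R,r)$ and $(r/R)^{pA_1}$, and the centering by $A_2/A_1$.

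The main obstacle is the circular dependency between the a priori bound on $\indeg(v,t)$ and the quadratic-variation estimate: a sharp bound on $V$ requires a sharp bound on $\indeg(v,t)$, but the latter is precisely what we are trying to prove. This is dealt with via a two-step bootstrap, where a cruder Doob $L^2$ argument on a truncated version of $M_t$ yields the coarser a priori bound at failure probability $O(n^{-5/3})$ (comfortably smaller than the target $O(n^{-4/3})$), which then feeds into the sharper Freedman bound.
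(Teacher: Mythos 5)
First, a point of reference: the paper does not prove this statement at all — it is quoted verbatim as Theorem~5.2 of \cite{geoSPA} and used as a black box — so there is no in-paper proof to compare against, and your attempt has to be judged on its own terms. Your martingale normalization is the natural route and is essentially sound: the identity $\e{\indeg(v,t)+A_2/A_1\mid\mathcal{F}_{t-1}}=(\indeg(v,t-1)+A_2/A_1)(1+pA_1/t)$ does hold, the increment bound $|M_t-M_{t-1}|\le 1$ is valid whenever $A(v,t)\le 1$ (which in fact holds deterministically for large $t$ because $\indeg(v,t)\le t-1$ and $A_1<1$; your stated reason ``since $pA_1<1$'' is not the right one), and the Freedman step with $\lambda=c\sqrt{D\log n}$, predictable variance $O(D)$, and the hypothesis $D\ge f(n)\log n$ correctly produces an $n^{-\Theta(c^2)}$ tail with the Bernstein denominator dominated by $V$.

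The genuine gap is in the bootstrap. You claim the a priori bound $\indeg(v,t)\le 3D(t/R)^{pA_1}$ holds with failure probability $O(n^{-5/3})$ via a ``Doob $L^2$ argument.'' An $L^2$ (or first-moment) maximal inequality yields a tail of the form $V/\lambda^2$, which is polynomial in $D$; since $D$ may be as small as $f(n)\log n$ with $f$ growing arbitrarily slowly, this cannot reach $O(n^{-5/3})$, so the first stage of your two-step scheme fails as described. The standard repair collapses the bootstrap into a single pass: let $T$ be the first time $t$ at which $\indeg(v,t)>3D(t/R)^{pA_1}$ and apply Freedman to the stopped martingale $M_{t\wedge T}$, whose predictable quadratic variation is $O(D)$ by construction; a deviation of the stopped process large enough to trigger the stopping time is itself excluded with probability $1-O(n^{-4/3})$, whence $T>2R$ on the good event and the truncation is vacuous. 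Separately, matching the explicit constant $2/(pA_1)$ in the stated bound requires running this with an a priori constant $1+o(1)$ rather than $3$ (otherwise the $c$ forced by the $n^{-4/3}$ target need not fit under $2/(pA_1)$ when $pA_1$ is close to $1$); this is routine bookkeeping but should not be waved away.
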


In particular, setting $r=2R$, the above theorem implies that if $\indeg(v,R) \geq f(n) \log n$ for some $f(n)=\omega(1)$, then 
\begin{equation}
\label{doubling}
\p{\indeg(v,2R) \geq (2^{pA_1}-o(1))\indeg(v,R) 
} \geq 1 - O(n^{-4/3}) \:.
\end{equation}

\begin{lemma}
\label{lem:technical}
Let $\delta, \varepsilon\in (0,1)$ be arbitrary constants, and let $\tau=n^\beta$ and $y=n^\gamma$ be functions satisfying
\eq{yw}.
For any old vertex $v$ we have
$$\p{\frac{\indeg(v,\tau)}{\indeg(v,n)} \ge 
 n^{\varepsilon}\left(\frac{y \log n}n\right)^{pA_1}} = O\left((\tau/y)^{(1+o(1))\delta^2 
 pA_1 / 2} + \frac{\log n}{n^{4/3}}\right).$$
\end{lemma}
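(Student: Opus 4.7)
The plan is as follows. Since in-degree is nondecreasing, $\indeg(v,\tau) \leq \indeg(v,y)$, so it suffices to upper-bound $\p{\indeg(v,y)/\indeg(v,n) \geq n^\varepsilon (y\log n/n)^{pA_1}}$, i.e.\ to show that $\indeg(v,n)$ typically exceeds a specified polynomial multiple of $\indeg(v,y)$. The two summands in the claimed bound correspond to two separate growth phases, $\tau\to y$ and $y\to n$, handled by distinct concentration arguments.

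\textbf{Growth from $y$ to $n$.} On the good event that $\indeg(v,y)\geq f(n)\log n$ for some $f(n)\to\infty$, I apply \eqref{doubling} iteratively along the geometric sequence $y,2y,4y,\dots$ up to the largest term not exceeding $n/2$. Each application holds with probability $1-O(n^{-4/3})$, and there are $O(\log(n/y))=O(\log n)$ such applications. A union bound gives, with probability $\geq 1-O(\log n\cdot n^{-4/3})$,
\[
\indeg(v,n)\geq \indeg(v,y)(n/y)^{pA_1}(1-o(1)),
\]
contributing the second summand $\log n/n^{4/3}$ of the stated bound.

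\textbf{Growth from $\tau$ to $y$; main obstacle.} The harder step is to prove
\[
\p{\indeg(v,y) < (y/\tau)^{pA_1(1-\delta)}} \leq (\tau/y)^{(1+o(1))\delta^2 pA_1/2}.
\]
I analyze the shifted process $Z_t\coloneqq\indeg(v,t)+A_2/A_1$, a pure-birth process with conditional increment probability $pA_1Z_t/t$. Since $v$ is old, its birth time satisfies $t_v\leq\tau$, so $Z_\tau\geq A_2/A_1$ deterministically. Writing the telescoped sum $\log(Z_y/Z_\tau)=\sum_{t=\tau}^{y-1}\log(1+(Z_{t+1}-Z_t)/Z_t)$ and observing that each conditional mean is close to $pA_1/t$, the total expected log-growth is $(1+o(1))pA_1\log(y/\tau)$. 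A Chernoff-type bound for martingale differences (e.g.\ Freedman's inequality applied after centering) then yields the required polynomial tail with the stated exponent $(1+o(1))\delta^2 pA_1/2$. Note that $(y/\tau)^{pA_1(1-\delta)}=n^{(\gamma-\beta)pA_1(1-\delta)}$ is polynomially large in $n$, so the threshold $f(n)\log n$ from the previous step is automatically met on this good event.

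\textbf{Combination.} On the intersection of the two good events, which occurs with probability $1-O((\tau/y)^{(1+o(1))\delta^2 pA_1/2}+\log n/n^{4/3})$,
\[
\frac{\indeg(v,\tau)}{\indeg(v,n)}\leq \frac{\indeg(v,y)}{\indeg(v,n)}\leq \frac{(y/n)^{pA_1}}{1-o(1)},
\]
which is much smaller than $n^\varepsilon(y\log n/n)^{pA_1}$ since $n^\varepsilon(\log n)^{pA_1}\to\infty$. The main technical difficulty lies in the martingale concentration for the initial phase: in the small-$Z_t$ regime the $A_2$ baseline governs the birth rate, whereas in the large-$Z_t$ regime the Yule-process intuition gives the sharp $(1+o(1))pA_1/t$ expected increment --- handling both regimes uniformly is what produces the exponent $(1+o(1))\delta^2 pA_1/2$ in the tail.
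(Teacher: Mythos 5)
Your second phase (iterated use of \eqref{doubling} from $y$ up to $n/2$, with a union bound contributing the $\log n\cdot n^{-4/3}$ term) matches the paper's argument, which runs the doubling from $R=y\log n$ onward and absorbs the accumulated multiplicative error into the $n^{-\varepsilon}$ slack. The real divergence, and the gap, is in the first phase. You aim to prove the polynomial lower bound $\p{\indeg(v,y) < (y/\tau)^{pA_1(1-\delta)}} \leq (\tau/y)^{(1+o(1))\delta^2 pA_1/2}$ via a martingale argument on $\log Z_t$, and the key input --- that the total expected log-growth is $(1+o(1))pA_1\log(y/\tau)$ --- is circular as written. The conditional expected increment of $\log Z_t$ is $q_t\log(1+1/Z_t)$ with $q_t\approx pA_1Z_t/t$, i.e.\ $(pA_1/t)\cdot Z_t\log(1+1/Z_t)$, and $Z\log(1+1/Z)$ is bounded away from $1$ unless $Z$ is large. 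Since $Z_\tau$ may be as small as $1+A_2/A_1=O(1)$, the total deficit, of order $\sum_t pA_1/(2tZ_t)$, is only $o(\log(y/\tau))$ if you already know that $Z_t\to\infty$ along the way --- which is essentially what you are trying to prove. You flag this two-regime issue yourself but do not resolve it; a bootstrapping or stopping-time argument would be needed, and it is not routine to extract the stated exponent from Freedman's inequality once the bounded-increment term dominates the predictable variance (which here stays $O(1)$).

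The paper avoids all of this by asking much less of the first phase. It only needs $\indeg(v,y\log n)=\omega(\log n)$ so that the doubling theorem applies; it gets this by stochastically dominating $\indeg(v,y)-\indeg(v,\tau)$ from below by a sum of independent indicators with the \emph{frozen} rates $pA_1 d/i$, where $d=\indeg(v,\tau)\geq 1$ (ignoring all intermediate growth), whose mean is $(1+o(1))pA_1 d\log(y/\tau)$, and applying the ordinary Chernoff lower tail \eqref{chernoff_lower_tail}; this yields exactly the exponent $(1+o(1))\delta^2 pA_1 d/2\geq(1+o(1))\delta^2 pA_1/2$. A second identical boost from $y$ to $y\log n$ upgrades the degree to $\Omega(d\log(y/\tau)\log\log n)=\omega(\log n)$ at negligible extra probability cost, after which the doubling supplies the full factor $(n/(y\log n))^{pA_1}n^{-\varepsilon}$. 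Your polynomial-growth claim for $\indeg(v,y)$ is plausibly true, but it is strictly stronger than what the lemma needs; replacing your martingale step by the domination argument would close the gap.
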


\begin{proof}
If $d := \indeg(v,\tau)=0$, the conclusion is obvious, so assume that $d\geq 1$.
Define the following events:
\begin{align*}
E_1 & := \left\{\indeg(v,y) \ge (1-\delta) pA_1 d \log (y/\tau)\right\}, \\
E_2 & := \left\{\indeg(v,y\log n) \ge
(1-\delta)^2 (pA_1)^2 d \log (y/\tau) (\log \log n)\right\}, \mathrm{and}\\
E_3 & := \left\{\indeg(v,n) \ge 
(1-\delta)^2 (pA_1)^2 d \log (y/\tau) (\log \log n)
(n/(y \log n))^{pA_1} n^{-\varepsilon} \right\}.
\end{align*}

{ We remark that $\log (y/\tau) = (\gamma -\beta)\log n$.}
Note that if $E_3$ happens then
$$\frac{d}{\indeg(v,n)} <
\left(\frac {y \log n}{n}\right)^{pA_1} n^{\varepsilon}\:.$$
{ Hence to prove the lemma we want to bound the probability of $E_3^c$.}
We will prove that
the probabilities $\p{E_1^c}$ and $\p{E_2^c|E_1}$
are at most $O((\tau/y)^{(1+o(1))\delta^2  pA_1 / 2})$,
and that $\p{E_3^c | E_1,E_2}$ is at most
$O(n^{-4/3}\log n)$.
This would prove the lemma, since
$$\p{E_3^c}
\le \p{E_1^c} + \p{E_3^c | E_1}
\le \p{E_1^c} + \p{E_2^c|E_1} +
\p{E_3^c | E_1,E_2} .$$

First, we bound $\p{E_1^c}$.
Note that for each $i\in\{\tau+1,\dots,y\}$, the probability that $v_i$ creates an edge to $v$ is at least $pA_1 d /i$.
In fact, $\indeg(v,y)-\indeg(v,\tau)$ is stochastically larger than the sum of $y-\tau$ independent indicator variables $X_{\tau+1},\dots,X_y$
with $\e{X_i}= pA_1 d /i$,
as in this formula we have ignored the neighbours accumulated in rounds $\tau +1,\dots,y$.
Then 
$$\e{\sum X_i} =(1+o(1)) pA_1 d (\log(y/\tau)),$$
hence by the multiplicative Chernoff bound~\eq{chernoff_lower_tail},
$$\p{E_1^c}
\le \exp(-\delta^2 (1+o(1)) pA_1 d \log(y/\tau)/2) = (\tau/y)^{(1+o(1))pA_1 d \delta^2/2} \:.$$

Second, we bound $\p{E_2^c | E_1}$.
Conditional on $E_1$, 
by a similar argument, the difference in in-degrees
$\indeg(v,y \log n)-\indeg(v,y)$ is stochastically larger than the sum of $y \log n - y$ independent indicator variables $Y_{y+1},\dots,Y_{y\log n}$
with $\e{Y_i}= (1-\delta)  (pA_1)^2 d \log (y/\tau ) /i$.
Since
$$\e{\sum Y_i} = (1+o(1)) (1-\delta)  (pA_1)^2 d \log (y/\tau) (\log \log n) \:,$$
by the multiplicative Chernoff bound~\eq{chernoff_lower_tail},
\begin{align*}
\p{E_2^c | E_1}  \le \exp(- (1+o(1))\delta^2 (1-\delta)  (pA_1)^2 d \log (y/\tau) (\log \log n)/2)  = (\tau/y)^{-\Omega(d \log \log n)} \:.
\end{align*}

Finally, conditional on $E_1$ and $E_2$ we may use
\eqref{doubling} repeatedly for $R=y\log n, 2y\log n, \dots,$ all the way up to $R=n/2$ to obtain that with probability at least $1 - O((\log n ) n^{-4/3})$ we have
\begin{align}
\indeg(v,n) & \geq \left(2^{pA_1}-o(1)\right)^{\log_2(n/(y\log n))}\indeg(v,y\log n) \\& >
(n/(y\log n))^{pA_1}n^{-\varepsilon} \indeg(v,y\log n)\:,\label{question}
\end{align}
completing the proof.
Note that condition~\eq{yw} together with $E_2$ ensure that $\indeg(v,y\log n)=\omega(\log n)$ hence \eqref{doubling} can indeed by applied.
\end{proof}


The following result follows  from the proof of  \cite[Theorem~1.5]{spa_def}.

\begin{theorem}
A.a.s.\ all vertices have outdegree $O(\log^2 n)$.
\end{theorem}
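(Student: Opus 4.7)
The plan is to peel off the contribution of an individual vertex's out-degree from the history of the process and bound the remainder via a Chernoff argument. First I would condition on the graph $G_{t-1}$ and observe that $\deg^+(v_t,t)$ is a sum of $t-1$ independent indicator random variables, one for each $u \in V_{t-1}$, where the $u$-indicator has parameter $p\,|S(u,t)| \leq p\,A(u,t) = p(A_1 \deg^-(u,t)+A_2)/t$. Summing over $u$ yields
\[
\e{\deg^+(v_t,t) \mid G_{t-1}} \leq \frac{pA_1|E_{t-1}| + pA_2(t-1)}{t}.
\]
If I can show that a.a.s.\ $|E_{t-1}| = O(t)$ uniformly over $t\leq n$, then this conditional mean is bounded by an absolute constant $\mu$. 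Since $\deg^+(v_t,t)$ is a sum of independent indicators with total mean at most $\mu$, the multiplicative Chernoff bound \eqref{chernoff_upper_tail} will give $\p{\deg^+(v_t,t) > C\log^2 n \mid G_{t-1}} \leq \exp(-\Omega(\log^2 n)) = n^{-\omega(1)}$ for a suitable constant $C$, and a union bound over the $n$ vertices delivers the conclusion.

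The crux is therefore to control $|E_t|$ uniformly in $t$. Taking expectations in $|E_t| = |E_{t-1}| + \deg^+(v_t,t)$ yields the linear recurrence $\e{|E_t|} \leq \e{|E_{t-1}|}(1+pA_1/t) + pA_2$, which, since $pA_1<1$, solves to $\e{|E_t|} \leq C_0 t$ for an absolute constant $C_0$ depending only on $p,A_1,A_2$. To promote this to a high-probability bound holding uniformly in $t$, I would introduce the stopping time $\tau \coloneqq \min\{s\leq n : |E_s| > 2C_0 s\}$ and argue by a martingale concentration inequality that $\p{\tau \leq n} = o(1)$. On the event $\{\tau \geq s\}$ the conditional mean of the next increment is $O(1)$, so the centered partial sums form a martingale whose conditional variances are bounded above by the conditional means (sums of independent indicators), and Freedman's inequality yields the desired tail. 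For $s$ below a threshold $s_0 = \Theta(\log^2 n)$ I would use the trivial bound $|E_s|\leq s^2 = O(s\log^2 n)$, which only affects constants and is absorbed into the final $\log^2 n$ factor; for $s \geq s_0$ a dyadic union bound over $s$ propagates the control.

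The main obstacle is precisely this uniform control of $|E_t|$: the per-step increments $\deg^+(v_s,s)$ are not uniformly bounded --- in principle a single new vertex could pick up $\Theta(s)$ edges --- so a naive Azuma--Hoeffding bound with worst-case increment size $s$ is far too weak to propagate across all $n$ time steps. The stopping-time bootstrap above (or equivalently Freedman's inequality, exploiting that conditional variance equals conditional mean for sums of independent indicators) resolves this. The slack afforded by the target $\log^2 n$ (rather than $\log n$) comfortably absorbs both the bootstrap and the union bounds over the $n$ vertices and $n$ time steps, which is presumably why the weaker polylogarithmic bound is stated here rather than the morally correct $O(\log n)$ bound that the Chernoff step alone would suggest.
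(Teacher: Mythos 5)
The paper does not actually prove this statement itself --- it defers to the proof of Theorem~1.5 in \cite{spa_def} --- so I will assess your argument on its own terms. Your architecture (a linear recurrence giving $\mathbb{E}[|E_t|]=O(t)$, a stopping-time bootstrap to make this hold a.a.s.\ uniformly in $t$, then a per-vertex concentration step) is sensible and the expectation computations are correct. But the load-bearing claim --- that conditional on $G_{t-1}$ the out-degree $\deg^+(v_t,t)$ is a sum of \emph{independent} indicators with parameters $p\,|S(u,t)|$ --- is false, and both halves of your proof rest on it. All of these indicators share the single random location of $v_t$: the event that $u$ receives an edge is $\{v_t\in S(u,t)\}$ intersected with an independent $p$-coin, and across different $u$ these events are positively correlated whenever spheres of influence overlap. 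In the extreme case where $t-1$ spheres nearly coincide in a region $S_0$, the out-degree is $0$ with probability about $1-|S_0|$ and roughly $\operatorname{Bin}(t-1,p)$ otherwise, so the conditional mean can be $O(1)$ while the variable takes value $\Theta(t)$ with non-negligible probability. Hence the Chernoff bound \eqref{chernoff_upper_tail} cannot be applied to $\deg^+(v_t,t)$ given only $G_{t-1}$, and the assertion in your Freedman step that the conditional variance of each increment is bounded by its conditional mean also fails: the conditional second moment contains the cross terms $p^2\sum_{u\neq u'}|S(u,t)\cap S(u',t)|$, which are not controlled by $\sum_u p\,|S(u,t)|$.

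What is missing is control of the coverage number $N_t\coloneqq \#\{u\in V_{t-1}: v_t\in S(u,t)\}$. Conditional on $G_{t-1}$ \emph{and} the position of $v_t$, the out-degree is exactly $\operatorname{Bin}(N_t,p)$, so the theorem reduces to showing that a.a.s.\ no point of $S$ is covered by more than $O(\log^2 n)$ spheres of influence at its birth time. Your argument only yields $\mathbb{E}[N_t]=O(1)$, and Markov's inequality from a first moment gives no useful tail; the worst-case arrangement of spheres consistent with $\sum_u|S(u,t)|=O(1)$ allows $N_t=\Theta(t)$ on a small region. Ruling this out requires the spatial randomness: the positions $v_1,\dots,v_{t-1}$ are i.i.d.\ uniform, and the radii can be bounded above via the a.a.s.\ in-degree bounds (as in Theorem~\ref{thm:degs}); for a \emph{fixed} point the covering events are then genuinely independent across $u$ and concentrate, after which a discretization of the space (and care with the conditioning introduced by the in-degree event) finishes the job. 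This geometric overlap estimate is the real content of a correct proof and is absent from yours; without it the bootstrap for $|E_t|$ does not close either, since its variance control relies on the same false independence.
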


Since $\varepsilon$ can be chosen arbitrarily in Lemma~\ref{lem:technical}, and all outdegrees are polylogarithmic a.a.s., we may replace indegrees with total degrees in Lemma~\ref{lem:technical} and conclude the following.

\begin{corollary}
\label{cor:technical}
Let $\tau=\tau(n)=n^\beta$ and $y=y(n)=n^\gamma$ be functions satisfying
\eq{yw}, and let $\delta,\varepsilon\in(0,1)$ be arbitrary constants.
For any old vertex $v$  we have
$$\p{\frac{\deg(v,\tau )}{\deg(v,n)} \ge 
 n^{\varepsilon}\left(\frac{y \log n}n\right)^{pA_1}} = O\left((\tau /y)^{(1+o(1))\delta^2 pA_1 / 2} + \frac{\log n}{n^{4/3}}\right).$$
\end{corollary}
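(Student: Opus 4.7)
The plan is to deduce the corollary from Lemma~\ref{lem:technical} together with the polylogarithmic out-degree theorem stated just above it. Write $X \coloneqq \indeg(v,\tau)$, $X' \coloneqq \indeg(v,n)$, $Y \coloneqq \outdeg(v)$, and $Q \coloneqq n^\varepsilon(y\log n/n)^{pA_1}$ for the target threshold. Since the out-degree of $v$ is set at its birth and thereafter fixed, $\deg(v,t) = \indeg(v,t) + Y$ for all $t \ge t_v$, so the ratio in question is $(X+Y)/(X'+Y)$. I would first apply Lemma~\ref{lem:technical} with $\varepsilon/2$ in place of $\varepsilon$ (observing that the error bound there does not depend on $\varepsilon$), obtaining $\p{X/X' \ge Q \cdot n^{-\varepsilon/2}}$ at most the same error as stated in the corollary. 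I would then invoke the out-degree theorem to assume $Y \le C\log^2 n$ for some absolute constant $C$; a Chernoff estimate on the outgoing edges of a single vertex supplies a polynomial tail, readily absorbed into the $\log n/n^{4/3}$ term.

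On the intersection of these two events, the elementary inequality $(X+Y)/(X'+Y) \le X/X' + Y/X'$ (valid when $X' > 0$) gives
\[
\deg(v,\tau)/\deg(v,n) \le Q \cdot n^{-\varepsilon/2} + C\log^2 n / X'.
\]
The first term lies a factor $n^{-\varepsilon/2}$ below $Q$, so the target $\le Q$ is reached provided $X' \gtrsim n^{\varepsilon/2} \cdot C\log^2 n / Q$, which is polynomial in $n$. Fortunately, the internal structure of the proof of Lemma~\ref{lem:technical} supplies precisely this: its event $E_3$, applied with $\varepsilon/2$, already furnishes $X' = \Omega(n^{(1-\gamma)pA_1 - \varepsilon/2}/\mathrm{polylog})$ whenever $X \ge 1$, comfortably exceeding the required polynomial threshold and closing the argument in the case $X \ge 1$.

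The main obstacle is the edge case $X = 0$, in which Lemma~\ref{lem:technical} is vacuous (the indegree ratio is simply $0$) but the total-degree ratio becomes $Y/(X'+Y)$, which is not directly controlled and can exceed $Q$ when $X'$ is too small. I would handle this by conditioning on $t_1$, the first round after $\tau$ at which $v$ acquires an incoming edge. The probability $\p{t_1 > T_0}$ is bounded by $(\tau/T_0)^{pA_2}$ via the minimum base rate $pA_2/i$, while on $\{t_1 \le T_0\}$ the preferential-growth machinery from the proof of Lemma~\ref{lem:technical}, now applied on the interval $[t_1, n]$ rather than $[\tau, n]$, again supplies a polynomial lower bound on $X'$ just as in the $X \ge 1$ case. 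Balancing the two estimates by a judicious choice of the intermediate threshold $T_0 \in (\tau, y)$, so that both pieces fit within the stated error $O((\tau/y)^{(1+o(1))\delta^2 pA_1/2} + \log n/n^{4/3})$, is the most delicate part of the proof.
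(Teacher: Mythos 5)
Your treatment of the main case $\indeg(v,\tau)\ge 1$ follows the paper's route exactly: the paper's entire proof of this corollary is the one-sentence remark that $\varepsilon$ is arbitrary in Lemma~\ref{lem:technical} and that all out-degrees are a.a.s.\ polylogarithmic, so in-degrees may be replaced by total degrees. Your elaboration is the right way to make that remark precise: since the out-degree $Y$ of $v$ is fixed at birth, $\deg(v,t)=\indeg(v,t)+Y$; applying the lemma with $\varepsilon/2$ and using the inequality $(X+Y)/(X'+Y)\le X/X'+Y/X'$ reduces everything to a polynomial lower bound on $X'=\indeg(v,n)$, which the event $E_3$ (run with $\varepsilon/2$) supplies as soon as $d=\indeg(v,\tau)\ge 1$, with a comfortable factor $n^{\varepsilon/2}/\mathrm{polylog}$ to spare over the threshold $Y/Q$. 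That part is correct and, if anything, more careful than the paper.

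Where your proof does not close is exactly the case you flag as delicate: $d=0$, which occurs with probability $1$ for $v=v_\tau$ itself. Your tail bound $\p{t_1>T_0}\lesssim(\tau/T_0)^{pA_2}$ is right, but its exponent is governed by $A_2$, not $A_1$: to force $\indeg(v,n)\gtrsim n^{(1-\gamma)pA_1-\varepsilon}$ you need $t_1\lesssim y\,n^{\varepsilon/(pA_1)}$, so the best this route yields is a failure probability of order $\bigl(\tau/(y\,n^{\varepsilon/(pA_1)})\bigr)^{pA_2}$. When $pA_2$ is small relative to $pA_1$ and $\delta$ is close to $1$, this exceeds the stated error term $(\tau/y)^{(1+o(1))\delta^2 pA_1/2}$ (and is far larger than $n^{-4/3}\log n$), so no choice of $T_0$ balances the two pieces. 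Indeed, the event that $v_\tau$ never receives an in-link at all has probability $\Theta((\tau/n)^{pA_2})$, and on that event, intersected with the constant-probability event $\outdeg(v_\tau)\ge 1$, the ratio $\deg(v,\tau)/\deg(v,n)$ equals $1$; for small $A_2$ this already contradicts the claimed bound. So the $d=0$ case requires either an extra $(\tau/y)^{(1+o(1))pA_2}$-type term in the error bound (which would then have to be carried through the union bound in the proof of Lemma~\ref{spa_properties}(b)) or a separate argument. To be fair, the paper's one-line proof does not address this case at all, so you have put your finger on a real gap rather than created one.
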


We will use the following theorem from~\cite{spa_typical}.

\begin{theorem}[Theorem~2.3 in \cite{spa_typical}]
\label{thm:degs}
Let $f(n)$ be any function that goes to infinity with $n$.
A.a.s.\ for all $i\in\{1,2,\dots,n\}$
and all $t\in\{i,\dots,n\}$ we have
$$\indeg(v_i,t) = O \left( 
f(n) (\log n)  (t/i)^{pA_1}
\right) \:.$$
\end{theorem}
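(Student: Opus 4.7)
The plan is to fix an index $i$ and analyse the stochastic process $d_i(t) := \indeg(v_i,t)$ for $t = i, i+1, \ldots, n$, then take a union bound over the $O(n^2)$ pairs $(i,t)$. The key dynamical observation is that, conditional on the history up to time $t$, the increment $d_i(t+1) - d_i(t)$ is a Bernoulli random variable with parameter at most $p\bigl(A_1 d_i(t) + A_2\bigr)/(t+1)$, with equality unless the sphere of influence $S(v_i,t+1)$ already covers all of $S$. A stochastic domination argument lets us assume equality throughout, so we may analyse the corresponding ``un-truncated'' model.

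For the expected trajectory, set $\gamma_t := \prod_{s=i+1}^{t}\bigl(1 + pA_1/s\bigr)$; writing $\log\gamma_t = \sum_{s}\log(1+pA_1/s) = pA_1 \log(t/i) + O(1)$ gives $\gamma_t = \Theta\bigl((t/i)^{pA_1}\bigr)$. Then $M_t := \bigl(d_i(t) + A_2/A_1\bigr)/\gamma_t$ is a (super)martingale starting at $M_i = A_2/A_1$, and hence $\e{d_i(t)} = O\bigl((t/i)^{pA_1}\bigr)$. To upgrade this to a high-probability bound, applying Azuma--Hoeffding naively to $M_t$ gives deviation of order $\sqrt{t\log n}$, which for small $i$ vastly exceeds the target $f(n)(\log n)(t/i)^{pA_1}$. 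One must instead exploit the self-bounding structure of the increments: the conditional variance of $d_i(s) - d_i(s-1)$ is bounded by $p\bigl(A_1 d_i(s-1) + A_2\bigr)/s$, which is small precisely when $d_i(s-1)$ is small.

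To formalise this, I would introduce the stopping time
$$\tau := \min\Bigl\{t \geq i : d_i(t) > C f(n)(\log n)(t/i)^{pA_1}\Bigr\}$$
for a large absolute constant $C$, and apply Freedman's maximal martingale inequality to the centred process $d_i(t) - \sum_{s=i+1}^{t}\e{d_i(s)-d_i(s-1) \mid \mathcal{F}_{s-1}}$ stopped at $\tau \wedge n$. On $\{\tau > s\}$, the a-priori bound $d_i(s-1) \leq C f(n)(\log n)(s/i)^{pA_1}$ yields a cumulative conditional variance of order $f(n)(\log n)\,\gamma_t$; together with the fact that $\e{d_i(t)} = O(\gamma_t)$ is much smaller than the threshold, Freedman's inequality gives $\p{\tau \leq n} \leq n^{-3}$ once $C$ is chosen large enough. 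A union bound over the $O(n^2)$ pairs $(i,t)$ finishes the proof. The main obstacle is the concentration step, since vanilla Azuma loses too much: the variance-aware bound (Freedman, or equivalently an inductive doubling argument along the scales $i, 2i, 4i, \ldots, n$ in the spirit of~\eqref{doubling}) is essential in order to match the sharp polynomial growth rate $(t/i)^{pA_1}$.
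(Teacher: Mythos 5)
First, a point of reference: the paper does not prove this statement at all --- it is imported verbatim as Theorem~2.3 of \cite{spa_typical}, so there is no in-paper proof to compare against. Judged on its own merits, your setup is the standard and correct one (stochastic domination by the un-truncated process, the normalizer $\gamma_t=\prod_{s=i+1}^{t}(1+pA_1/s)=\Theta((t/i)^{pA_1})$, the martingale $M_t=(d_i(t)+A_2/A_1)/\gamma_t$, and hence $\e{d_i(t)}=O(\gamma_t)$), and you correctly diagnose that vanilla Azuma is hopeless. But the concentration step as you have written it does not work, and this is the entire content of the theorem.

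The gap is in the Freedman application to the \emph{additively} centred process $N_t := d_i(t)-\sum_{s\le t}\e{d_i(s)-d_i(s-1)\mid\mathcal F_{s-1}}$. On the event $\{\tau\le n\}$ the realized conditional drift is itself large: for $s<\tau$ one only knows $d_i(s-1)\le Cf(n)(\log n)(s/i)^{pA_1}$, so the cumulative drift is bounded by roughly $pA_1Cf(n)(\log n)\sum_{s\le\tau}(s/i)^{pA_1}/s\approx Cf(n)(\log n)\bigl[(\tau/i)^{pA_1}-1\bigr]$, i.e.\ it comes within an additive $Cf(n)\log n$ of the threshold itself. Hence crossing the threshold forces only $N_\tau\gtrsim Cf(n)\log n$, not $N_\tau\gtrsim Cf(n)(\log n)\gamma_\tau$ --- your phrase ``$\e{d_i(t)}=O(\gamma_t)$ is much smaller than the threshold'' conflates the unconditional mean with the realized drift. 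Meanwhile the accumulated conditional variance up to $\tau$ is of order $Cf(n)(\log n)(\tau/i)^{pA_1}$, so Freedman's exponent is $\asymp f(n)\log n/(\tau/i)^{pA_1}$, which is $o(1)$ whenever $(\tau/i)^{pA_1}\gg f(n)\log n$ (e.g.\ $i=O(1)$, $\tau=n$). The bound is vacuous exactly for the old, high-degree vertices the theorem is about. The repair is to run the exponential/Freedman bound on the \emph{normalized} martingale $M_t$ instead: up to the stopping time $\min\{t: M_t>\lambda\}$ its increments are at most $1/\gamma_t\le 1$ and its predictable quadratic variation is $\sum_t p(A_1d_i(t-1)+A_2)/(t\gamma_t^2)=O(\lambda)$, so one gets $\p{\sup_t M_t>\lambda}\le e^{-\Omega(\lambda)}$ and $\lambda=Cf(n)\log n$ closes the union bound. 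Alternatively the dyadic doubling along scales $i,2i,4i,\dots$ (as in the quoted Theorem~5.2 of \cite{geoSPA} and display \eqref{doubling}) works; you mention this parenthetically as ``equivalent,'' but it is not equivalent to what you wrote --- it is the correct argument, and the version you actually spelled out fails.
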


We now have all the ingredients to prove
Lemma~\ref{spa_properties}.

\begin{proof}[Proof of Lemma~\ref{spa_properties}]
(a)
By Theorem~\ref{thm:degs}, a.a.s.\  
for all $i\in[n]$
and all $t\in\{i,\dots,n\}$ we have
$$\indeg(v_i,t) \le (\log n)^{3/2} (t/i)^{pA_1}.$$
Suppose this is the case and let $i\in[n]$.
Then at any time $t \in \{\max\{i,\tau +1\},\dots,n\}$, the sphere of influence of $v_i$ at time $t$ has volume at most
\begin{align*}
\frac{A_1 (\log n)^{3/2} (t/i)^{pA_1}+A_2}{t}
& \le
A_1 (\log n)^{3/2}\: t^{pA_1-1}+\frac{A_2}{t} \\
& < \tau^{pA_1 - 1} \log ^2 n & \mathrm{since\ }pA_1<1\\
& < c_m L^m\:. & \mathrm{by\ \eq{Lw}}
\end{align*}
Thus any incoming edge to $v_i$ that is created after round $w$ is short.

(b)
The number of old vertices is $\tau=n^{\beta}$, and the probability that an old vertex
fails to satisfy~\eq{eq:degcond} is 
$O((\tau/y)^{(1+o(1))\delta^2 pA_1 / 2}) + O(n^{-4/3}\log n)$
for any constant $\delta\in(0,1)$
by Corollary~\ref{cor:technical}.
Recall that $(\tau/y)=n^{-(\gamma-\beta)} $, where $\beta <\gamma$. By \eq{thetacond} 
we can choose $\delta$ close enough to 1 so that  $(\gamma-\beta)(1+o(1))\delta^2 pA_1 / 2>\beta$. 
The union bound completes the proof.
\end{proof}

\bibliographystyle{plain}
\bibliography{RumourSPA}

\begin{thebibliography}{10}

\bibitem{spa_def}
W.~Aiello, A.~Bonato, C.~Cooper, J.~Janssen, and P.~Pra{\l}at.
\newblock A spatial web graph model with local influence regions.
\newblock {\em Internet Math.}, 5(1-2):175--196, 2008.

\bibitem{BonGleich}
A.~Bonato, D.~Gleich, D.~Mitsche, P.~Pra{\l}at, Y.~Tian, and D.~Young.
\newblock Dimensionality of social networks using motifs and eigenvalues.
\newblock {\em PLoS ONE}, 9(9): e106052, 2014.

\bibitem{geoprotean}
A.~Bonato, J.~Janssen, and P.~Pra{\l}at.
\newblock Geometric protean graphs.
\newblock {\em Internet Mathematics}, 8(1-2):2--28, 2012.

\bibitem{burning}
A.~Bonato, J.~Janssen, and E.~Roshanbin.
\newblock Burning a graph as a model of social contagion.
\newblock In Anthony Bonato, Fan~Chung Graham, and Pawe{\l} Pra{\l}at, editors,
  {\em Algorithms and Models for the Web Graph}, volume 8882 of {\em Lecture
  Notes in Computer Science}, pages 13--22. Springer International Publishing,
  2014.

\bibitem{threshbrad}
M.~Bradonji\'c, A.~Hagberg, and A.~Percus.
\newblock The structure of geographical threshold graphs.
\newblock {\em Internet Math.}, 5:113--140, 2008.

\bibitem{Bringmann}
K.~Bringmann, R.~Keusch, and J.~Lengler.
\newblock Average distance in a general class of scale-free networks with
  underlying geometry.
\newblock 2016.
\newblock Preprint, available at \url{http://arxiv.org/abs/1602.05712}.

\bibitem{coverPA}
C.~Cooper and A.~Frieze.
\newblock The cover time of the preferential attachment graph.
\newblock {\em J. Combin.~Th.~B}, 97:269--290, 2007.

\bibitem{spa_typical}
C.~Cooper, A.~Frieze, and P.~Pra{\l}at.
\newblock Some typical properties of the spatial preferred attachment model.
\newblock {\em Internet Mathematics}, 10(1-2):116--136, 2014.

\bibitem{DGH+87}
A.~Demers, D.~Greene, C.~Hauser, W.~Irish, J.~Larson, S.~Shenker, H.~Sturgis,
  D.~Swinehart, and D.~Terry.
\newblock Epidemic algorithms for replicated database maintenance.
\newblock In {\em Proceedings of PODC '87}, pages 1--12, New York, NY, USA,
  1987. ACM.

\bibitem{sublogarithmic}
B.~Doerr, M.~Fouz, and T.~Friedrich.
\newblock Social networks spread rumors in sublogarithmic time.
\newblock In {\em S{TOC}'11---{P}roceedings of the 43rd {ACM} {S}ymposium on
  {T}heory of {C}omputing}, pages 21--30. ACM, New York, 2011.

\bibitem{FPRU90}
U.~Feige, D.~Peleg, P.~Raghavan, and E.~Upfal.
\newblock Randomized broadcast in networks.
\newblock {\em Random Struct. Algorithms}, 1(4):447--460, 1990.

\bibitem{geopref2}
A.~Flaxman, A.~Frieze, and J.~Vera.
\newblock A geometric preferential attachment model of networks. {II}.
\newblock {\em Internet Math.}, 4(1):87--111, 2007.

\bibitem{ultrafast}
N.~Fountoulakis, K.~Panagiotou, and T.~Sauerwald.
\newblock Ultra-fast rumor spreading in social networks.
\newblock In {\em Proceedings of the {T}wenty-{T}hird {A}nnual {ACM}-{SIAM}
  {S}ymposium on {D}iscrete {A}lgorithms}, pages 1642--1660. ACM, New York,
  2012.

\bibitem{rgg}
T.~Friedrich, T.~Sauerwald, and A.~Stauffer.
\newblock Diameter and broadcast time of random geometric graphs in arbitrary
  dimensions.
\newblock {\em Algorithmica}, 67(1):65--88, 2013.

\bibitem{push_first}
A.~M. Frieze and G.~R. Grimmett.
\newblock The shortest-path problem for graphs with random arc-lengths.
\newblock {\em Discrete Appl. Math.}, 10(1):57--77, 1985.

\bibitem{ganes13}
G.~Ganesan.
\newblock Size of the giant component in a random geometric graph.
\newblock {\em Annales de l’Institut Henri Poincar\'{e}}, 49:1130--–1140,
  2013.

\bibitem{jacob2}
E.~{Jacob} and P.~{M{\"o}rters}.
\newblock Robustness of scale-free spatial networks.
\newblock 2015.
\newblock Preprint, available at \url{http://arxiv.org/abs/1504.00618}.

\bibitem{jacob1}
E.~{Jacob} and P.~{M{\"o}rters}.
\newblock {Spatial preferential attachment networks: Power laws and clustering
  coefficients}.
\newblock {\em Annals of Applied Probability}, 25(2):632--662, 2015.

\bibitem{geoSPA}
J.~Janssen, P.~Pra{\l}at, and R.~Wilson.
\newblock Geometric graph properties of the spatial preferred attachment model.
\newblock {\em Advances in Applied Mathematics}, 50(2):243--267, 2013.

\bibitem{lumpySPA}
J.~Janssen, P.~Pra{\l}at, and R.~Wilson.
\newblock Nonuniform distribution of nodes in the spatial preferential
  attachment model.
\newblock {\em Internet Mathematics}, 12(1-2):121--144, 2016.
\newblock Available at \url{arXiv:1506.06053}.

\bibitem{pushpull_rigorous}
R.~Karp, C.~Schindelhauer, S.~Shenker, and B.~V{\"o}cking.
\newblock Randomized rumor spreading.
\newblock In {\em 41st {A}nnual {S}ymposium on {F}oundations of {C}omputer
  {S}cience (FOCS'00)}, pages 565--574. IEEE Comput. Soc. Press, Los Alamitos,
  CA, 2000.

\bibitem{KempeKleinTardos}
D.~Kempe, J.~Kleinberg, and \'{E}. Tardos.
\newblock Maximizing the spread of influence through a social network.
\newblock In {\em Proceedings of the Ninth ACM SIGKDD International Conference
  on Knowledge Discovery and Data Mining}, KDD '03, pages 137--146, 2003.

\bibitem{forest_fire_journal}
J.~Leskovec, J.~Kleinberg, and C.~Faloutsos.
\newblock Graph evolution: Densification and shrinking diameters.
\newblock {\em ACM Transactions on Knowledge Discovery from Data (TKDD)}, 1(1),
  March 2007.
\newblock Article 2.

\bibitem{concentration}
C.~McDiarmid.
\newblock Concentration.
\newblock In {\em Probabilistic methods for algorithmic discrete mathematics},
  volume~16 of {\em Algorithms Combin.}, pages 195--248. Springer, Berlin,
  1998.

\bibitem{EpidemicNetworks}
R.~Pastor-Satorras and A.~Vespignani.
\newblock Epidemic dynamics and endemic states in complex networks.
\newblock {\em Phys. Rev. E}, 63:066117, May 2001.

\bibitem{penrose03}
M.~Penrose.
\newblock {\em Random geometric graphs}.
\newblock Oxford Studies in Probability. Oxford Un.~Press, 2003.

\bibitem{geoprefZuev}
K.~Zuev, M.~Bogu{\~{n}}\'{a}, G.~Bianconi, and D.~Krioukov.
\newblock Emergence of soft communities from geometric preferential attachment.
\newblock {\em Nature Scientific Reports}, 5:9421, 2015.

\end{thebibliography}

\end{document}